\documentclass[journal,12pt,onecolumn,draftclsnofoot]{IEEEtran}
\usepackage{wrapfig}
\usepackage{amsfonts}
\usepackage{amssymb,epic,eepic}
\usepackage{amsmath}
\usepackage{dsfont}
\usepackage{dcolumn}
\usepackage{bm}
\usepackage{bbm}
\usepackage{verbatim}
\usepackage{color}
\usepackage{amsthm}
\usepackage{tikz}
\usetikzlibrary{positioning}
\usepackage{MnSymbol,wasysym}
\usetikzlibrary{arrows,positioning,decorations.pathmorphing,decorations.markings,shapes,fadings}
\usepackage{pstool}
\usepackage{psfrag}
\usepackage{import}
\usepackage{varwidth}
\usepackage{nicefrac}
\sloppy

\newcommand{\nn}{{\mathbb N}}

\newcommand{\eps}{{\varepsilon}}        

\newcommand{\cP}{\mathcal P}

\newcommand{\cC}{\mathcal C}

\newtheorem{theorem}{Theorem}

\newtheorem*{acknowledgement}{Acknowledgement}

\newtheorem{definition}{Definition}
\newtheorem{example}{Example}

\newtheorem{remark}{Remark}

\newcommand{\tr}{\mathrm{tr}}

\makeatletter
\newcommand*{\rom}[1]{\expandafter\@slowromancap\romannumeral #1@}
\makeatother
\DeclareMathOperator{\conv}{conv}

\DeclareMathOperator{\linspan}{span}

\DeclareRobustCommand*{\IEEEauthorrefmark}[1]{\raisebox{0pt}[0pt][0pt]{\textsuperscript{\footnotesize #1}}}
\tikzset{
	schraffiert/.style={pattern=horizontal lines,pattern color=#1},
	schraffiert/.default=black
}
\usetikzlibrary{patterns}

\begin{document}
	\title{Reliable Communication under the Influence of a State-Constrained Jammer: A Novel Perspective on Receive Diversity}
	\author{Christian Arendt\IEEEauthorrefmark{1,2}, Janis N\"otzel\IEEEauthorrefmark{3,4} and Holger Boche\IEEEauthorrefmark{2}\\
		\scriptsize
		\vspace{0.2cm}                               
		\IEEEauthorblockA{\IEEEauthorrefmark{1}
		BMW Group, 80788 M\"unchen, Germany, Email: christian.ca.arendt@bmw.de}
	
	\IEEEauthorblockA{\IEEEauthorrefmark{2}
		Lehrstuhl f\"ur Theoretische Informationstechnik, Technische Universit\"at M\"unchen, 80290 M\"unchen, Germany, Email: boche@tum.de}
	\IEEEauthorblockA{\IEEEauthorrefmark{3}
		Theoretische Nachrichtentechnik, Technische Universit\"at Dresden, 01187 Dresden, Germany, Email: janis.noetzel@tu-dresden.de}\\
	\IEEEauthorblockA{\IEEEauthorrefmark{4}
		F\'{\i}sica Te\`{o}rica: Informaci\'{o} i Fen\`{o}mens Qu\`{a}ntics, Universitat Aut\`{o}noma de Barcelona, 08193 Bellaterra, Spain}
}
	\maketitle
\begin{abstract}
The question of robust direct communication in vehicular networks is discussed. In most state-of-the-art approaches, there is no central entity controlling channel access, so there may be arbitrary interference from other parties. Thus, a suitable channel model for Vehicle-to-X (V2X) communication is the Arbitrarily Varying Channel (AVC). Employing multiple antennas on a vehicle or sending over multiple frequencies to make use of diversity are promising approaches to combat interference. In this setup, an important question about diversity is how many antennas or orthogonal carrier frequencies are necessary in order to avoid system breakdown due to unknown interference in AVCs. For Binary Symmetric AVCs (AVBSC) and a physically meaningful identical state-constrained jammer, the deployment of a third, uncorrelated receiving antenna or the parallel transmission over three different orthogonal frequencies avoids symmetrizability and thus ensures positivity of the capacity of the overall communication channel. Furthermore, the capacity of the identical state-constrained composite AVBSC is continuous and shows super-activation, a phenomenon which was hitherto deemed impossible for classical communication without secrecy constraints. Subsuming, spatial and frequency diversity are enablers for reliable communication over communication channels with arbitrarily varying interference.
\end{abstract}
{\smallskip\bf Keywords:} vehicular connectivity, unknown interference, receive diversity, arbitrarily varying channel, super-activation.
\begin{section}{Introduction}
A prominent technology for Vehicle-to-X (V2X) communication in vehicular ad-hoc networks is an extension of the wireless LAN standard IEEE 802.11a, called IEEE 802.11p (11p) \cite{5514475}. Based on this standard, Dedicated Short Range Communication (DSRC) was developed in the United States \cite{5888501}. In order to avoid confusion, the term ITS-G5 \cite{its2011g5} is used for DSRC in Europe. Since 11p is intended for direct short-range and low-latency communication, there is no central entity controlling the use of spectral resources. For this reason, the frequency band extending from 5.85 GHz to 5.925 GHz is used in a shared, self-coordinated manner. Carrier sense multiple access/collision avoidance is applied as a medium access scheme to avoid packet collisions. Though there exist power control and adaption schemes, these techniques cannot prevent simultaneous channel access where the interference produced by the respective entities is generally unknown. This interference can also be caused by coexistence of different technologies in the same frequency band like, for example, 11p and Device-to-Device (D2D) communication in Long Term Evolution (LTE) for vehicle-to-vehicle communication (designated as PC5) \cite{7859501}. In this setup, packet collisions can happen causing unbounded delays in actual communication schemes, for example, in 11p dropping collided packets---a serious drawback for time-critical information exchange \cite{7313128}. Additionally, in scenarios with fast and arbitrarily varying channel conditions due to blocking or the influence of secondary moving objects in the surrounding of a vehicle, the channel conditions may not be tracked accurately by channel estimation techniques. Thus, the receiver is completely unaware of the actual fading conditions. In order to be able to enable new use cases in the context of vehicular connectivity with high reliability requirements, the communication hardware has to be designed to work properly even under worst-case conditions.\\
A well-studied model to account for the uncertainty in channel noise/interference is the Arbitrarily Varying Channel (AVC) introduced by Blackwell et al. in \cite{blackwell1960}. In addition, AVCs and the communication under channel uncertainty are generally discussed in \cite{720535} and \cite{csiszár2011information}. The AVC is based on the principle that a jammer can control the state of the channel in an arbitrary manner.\\
One technique to combat interference, thereby increasing the quality and reliability of a wireless link, is the deployment of multiple antennas at the transmitter and/or receiver, resulting in multiple-input-single-output, Single-Input-Multiple-Output (SIMO) or Multiple-Input-Multiple-Output (MIMO) schemes, respectively. Besides increasing data rate, these systems exploit antenna diversity, that is, every antenna observes a more or less uncorrelated version of the transmitted signal(s). However, to the author's knowledge, there exists no theory relating diversity techniques to the symmetrizability of AVCs. This gap is closed by this contribution showing the potential of different diversity schemes in scenarios with arbitrarily varying interference.\\
An important question, which has to be answered in the context of safety related information exchange, is whether it is possible to reliably communicate over AVCs, that is, using coding techniques to correct errors caused by collisions. Communication over AVCs at positive rates can be possible using Common Randomness (CR)-assisted coding, even when it is impossible without \cite{AHLSWEDE1969457}. CR-assisted coding is characterized by a transmitter (Tx) and a receiver (Rx) observing perfectly correlated outcomes of a random experiment. CR can be established by a satellite signal or a common synchronization procedure \cite{7217803}. It is used to coordinate the choice of a specific code out of a common codebook library. A CR-assisted code is explicitly defined in Definition~\ref{def:CR_ass_code}. In contrast, for deterministic coding, communication at positive rates over AVCs cannot be guaranteed, since the AVC can be symmetrizable. In this situation, the jammer may choose its inputs such that any two codewords may be confused at the decoder. In \cite{2627}, Csiszar and Narayan deduced that non-symmetrizability of an AVC is a sufficient condition for the positivity of the deterministic code capacity under average error criterion. In addition, a famous result by Ahlswede \cite{Ahlswede1978} states that the deterministic code capacity of a discrete memoryless AVC either equals its CR-assisted code capacity or else is zero (Ahlswede dichotomy). Thus, common randomness shared between transmitter and receiver can boost the capacity away from zero. Explicit examples for this effect are known, see for example, \cite{Ahlswede1978}. In the previously mentioned contribution, an example of a symmetrizable point-to-point channel that has zero capacity for deterministic codes, but which has strictly positive capacity for CR-assisted codes, is given. Transferring the concept of CR-assisted coding to a practical application, the source of common randomness is replaced by the concept of correlated sources. This technique is intensively investigated in \cite{605589,6883596} and \cite{6620445}. In \cite{6883596}, the authors show that for an arbitrarily varying broadcast channel no more than $\mathcal{O}(\log n)$ outputs of correlated sources at block length $n$ are sufficient to achieve the same capacity as by using coordination established by common randomness. Nevertheless, in order to guarantee reliable message transmission under the influence of a jammer when the channel is symmetrizable, all previous work assumes an external assistance mechanism that cannot be accessed by the jammer---be it a side channel or source. Summarizing, using CR-assisted coding is a promising approach to combat arbitrarily varying interference, in theory. Nevertheless, when it comes to practical implementation, the availability of a common source of randomness shared by encoder and decoder is equivalent to the use of an additional control channel, as used for example in mode 3 in LTE-D2D \cite{7859501}, which may not always be available. Thus, the practical relevance of deterministic codes for reliable safety-related communication in vehicular networks is indisputable.\\
In classical communication, transmitting over multiple useless channels suggests zero capacity of the overall system. In contrast, the phenomenon of super-activation, which is well known from quantum physics, states that the classical additivity arguments of basic resources are not valid in general as first observed by Smith and Yard in \cite{smith2008quantum}. For classical channel models, super-activation, which is the strongest form of violation of additivity, has first been demonstrated in the context of arbitrarily varying wiretap channels, see for example, \cite{6620445,6875259,7447794,7541865}. Wiretap channels account for secrecy, that is, information protection against eavesdropping in communications in an information-theoretic setting. For this class of channels, the joint use of multiple orthogonal channels enables super-activation for the secrecy capacity, the first appearance in a classical setting \cite{6620445}, \cite{7447794,7541865}. Super-additivity is a weaker form of violation of the additivity of the capacity of orthogonal channels. The phenomenon of super-additivity has a long history in classical information theory: Ahlswede showed in \cite{ahlswede1970} that the treatment of capacity expressions in the context of AVCs under maximum error and deterministic coding is strongly related to Shannon's zero-error capacity problem for a Discrete Memoryless Channel (DMC). Shannon suggested in \cite{1056798} in 1956 that the zero-error capacity of orthogonal DMC is additive. Forty years later, Alon \cite{Alon1998} disproved this conjecture by demonstrating super-additivity of the zero-error capacity. For orthogonal AVCs, the message transmission capacity under average error criterion cannot be super-activated. Instead, super-additivity may occur \cite{7541865}.\\
In this contribution, we present an interference combating concept based on different diversity schemes and a state-constrained jammer. In contrast to \cite{2627}, our scheme focuses on avoiding symmetrizability rather than enabling communication at positive rates over symmetrizable channels by imposing power constraints, although we also explore the impact of latter in Theorem~\ref{thm:super_activ_sc_pw}. In addition, we differentiate from \cite{1056995} since our findings, with one exception, are based on deterministic coding schemes. We discuss channel models which exemplary account for the following real world scenarios:
\begin{enumerate}
	\item MIMO communication with a state-constrained jammer (one input controlling the channel state of all diversity branches simultaneously) due to a limited number of transmitting antennas at the jammer or a trivial precoding strategy,
	\item Frequency diversity interfered by a second, uncoordinated Tx-Rx pair operating in trivial diversity mode (same input symbol sent over all operating frequencies),
	\item Trivial spatial receive diversity (same input symbol sent over all parallel channels resulting in one 'effective' channel) to compensate arbitrary interference caused by a state-constrained jammer (see 1.),
	\item Trivial Frequency diversity (same input symbol sent over all operating frequencies) interfered by a second, uncoordinated transmitter-receiver pair using trivial diversity.
\end{enumerate}
	In the scenarios relying on spatial diversity, the jammer can also be replaced by a second, uncoordinated Tx-Rx pair operating in the same frequency band. The uncoordinated Tx-Rx pair can be located inside the vehicle, for example, when using multiple uncoordinated connectivity modules simultaneously operating in the same frequency band, or can be embodied by two additional communicating parties. A visualization for the previously introduced AVC scenarios in the context of vehicular communication is shown in Figure~\ref{fig:car_2_car_jamming}.
\begin{figure}
	\setlength\abovecaptionskip{-8pt}
	\centering
	\includegraphics*[width=1\textwidth]{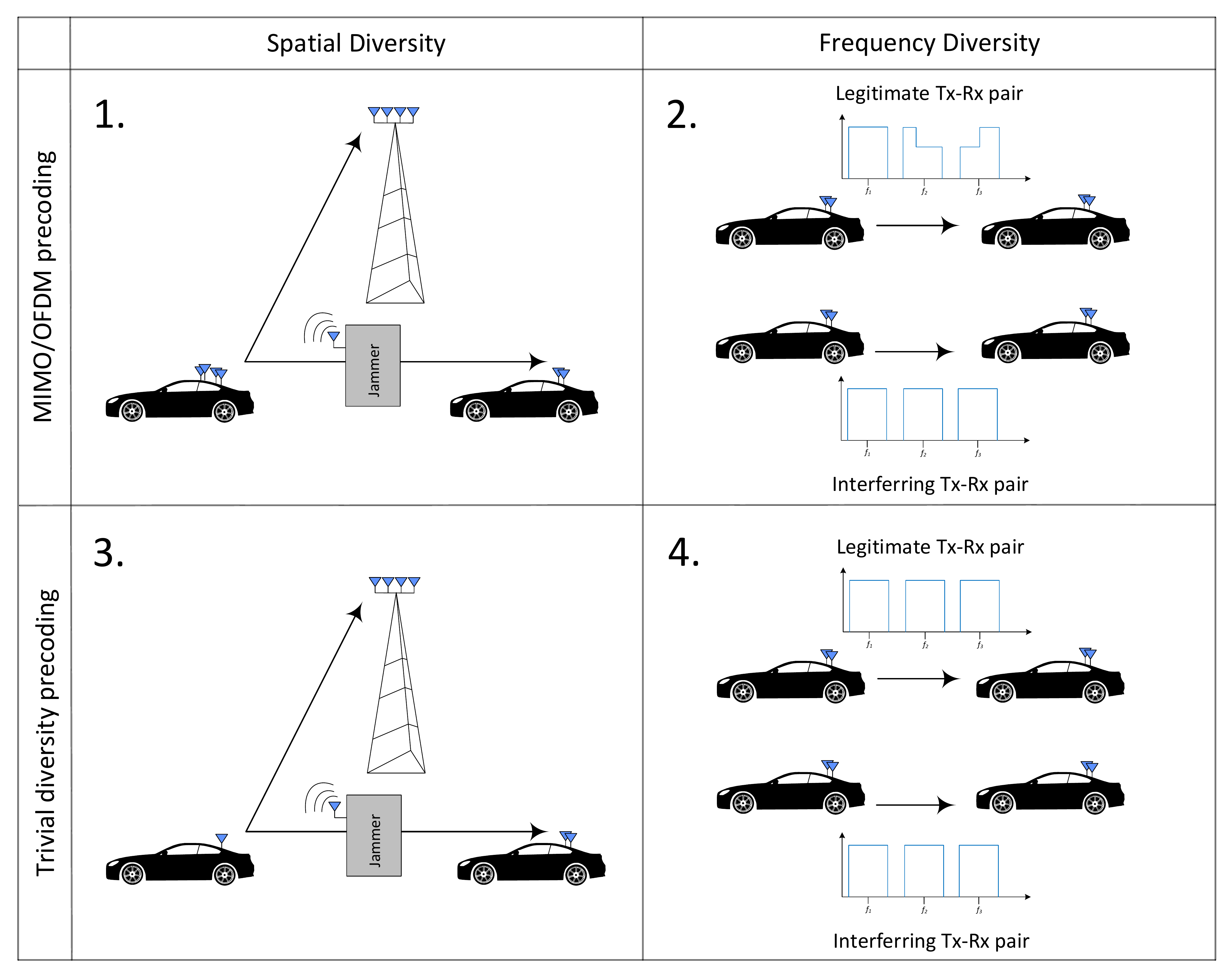}
	\vspace{-0.5 cm}
	\caption{AVC scenario in vehicular communication for varying diversity schemes: The channel state is controlled by an adversary (jammer) which causes arbitrary, unknown interference or, alternatively, interference is caused by a second, uncoordinated Tx-Rx pair. Reliable communication has to be ensured for any strategy of the jammer or any interference caused by the secondary communication parties.}
	\label{fig:car_2_car_jamming}
\end{figure}
	We will link the channel models, introduced in Section~\ref{sec:system model}, to the previously introduced scenarios at the appropriate passages.\\
	One of the main objectives of this contribution is to show that in the case of the identical state-constrained (i.s.c.) composite independent AVBSC, super-activation occurs. Thus, the capacity of such a composite system can be positive and continuous, even in situations where each individual capacity is zero and discontinuous as a function of the system parameters. Continuity, which is well-known for CR-assisted coding as discussed in \cite{wiese-noetzel-boche-I}, is not valid for deterministic coding for AVCs in general \cite{doi:10.1063/1.4902930}. For the AVC, the literature shows that discontinuity is a generic feature \cite{wiese-noetzel-boche-I,doi:10.1063/1.4902930}. The overarching aim of our investigations is to enlarge the capacity of the overall communication channel as well as to prevent complete system breakdown caused by symmetrizability by using diversity schemes.
	\linebreak\\
		{\bf Outline of the paper.} The remainder of the paper is structured as follows: First, we discuss the system model under consideration, including implications for practical applications, and introduce basic channel models in Section~\ref{sec:system model}. In Section~\ref{sec:code_concepts_and_perf_measures}, code concepts and performance measures are established. Next, we state our main result in Section~\ref{sec:main_result}. This shows that except in trivial cases, the adjoining of a third, uncorrelated orthogonal channel already avoids symmetrizability for an i.s.c. composite independent AVBSC. In addition, we show that the capacity of an i.s.c. composite AVBSC shows super-activation if respective power constraints are imposed on the jammer and the transmitter, or if channel state combinations of little practical relevance are excluded. Next, we transfer our results to the i.s.c. composite orthogonal AVBSC. The proof of our main theorem together with the proof of further results can be found in Section~\ref{sec:proofs}. Finally, Section~\ref{sec:discussion_and_conclusion} discusses the implications of the derived results and gives hints for further investigations.
	\end{section}	
\begin{section}{System Model and Basic Channel Models}\label{sec:system model}
We adapt our notation to the one presented in \cite{7447794}, \cite{7572096} and \cite{noetzel2016}: We set $x':=1-x$ for any number $x\in\mathbb R$. For $L\in\nn$, we define $[L]:=\{1,\ldots,L\}$. We denote the set of permutations on $[L]$ by $S_L$. Let two sets $\mathcal{X}$ and $\mathcal{Y}$ of cardinality $|\mathcal X|=L_\mathcal{X}$ and $|\mathcal Y|=L_\mathcal{Y}$ with $L_\mathcal{X},L_\mathcal{Y}\in\nn$ be given. Their product can be calculated in the following way: 
\begin{align}
\mathcal{X}\times\mathcal{Y}:=\{(x,y):x\in\mathcal{X},\ y\in\mathcal{Y}\}. 
\end{align}
Additionally, $\mathcal{X}^n$ is the n-fold product of $\mathcal{X}$ with itself for any $n\in\nn$. The set of probability distributions on a finite set $\mathcal{X}$ is denoted by
\begin{align}
\cP(\mathcal{X}):=\{p:\mathcal{X}\to\mathbb R\ :\ p(x)\geq0\ \forall\ x\in\mathcal{X},\ \sum_{x\in\mathcal{X}}p(x)=1\}.
\end{align}
An important subset of elements of $\cP(\mathcal{X})$ is the set of its extremal points, the Dirac-measures: For $x_1,x_2\in\mathcal{X}$, $\delta_{x_1}\in\cP(\mathcal{X})$ is defined through $\delta_{x_1}(x_2)=\delta(x_1,x_2)$, where $\delta(\cdot,\cdot)$ is the usual Kronecker-delta symbol. In order to exploit structural properties, we transfer the probabilistic concepts to linear algebra by considering $\cP(\mathcal{X})$ as being embedded into $\mathbb R^{L_\mathcal{X}}$ through the bijection $p\mapsto \sum_{x\in\mathcal{X}}p(x)e_x$.	Under this transformation, $\delta_x$ is mapped to the fixed basis $\{e_x\}_{x\in\mathcal X}$. This allows a natural use of matrix calculus in our analysis. We solely introduce results from multi-linear algebra for bipartite systems. The generalization to the multi-partite case is straightforward. Besides $\{e_x\}_{x\in\mathcal X}$ for $\mathbb R^{L_\mathcal{X}}$, we use a second fixed basis $\{e_y\}_{y\in\mathcal Y}$ for $\mathbb R^{L_\mathcal{Y}}$. $L_\mathcal{X}\times L_\mathcal{Y}$ matrices define linear maps from $\mathbb R^{L_\mathcal{X}}$ to $\mathbb R^{L_\mathcal{Y}}$ via their actions in these bases. The tensor product of $\mathbb R^{L_\mathcal{X}}$ with $\mathbb R^{L_\mathcal{Y}}$ is $\mathbb R^{L_\mathcal{X}}\otimes\mathbb R^{L_\mathcal{Y}}:=\linspan\{e_x\otimes e_{y}\}_{x,\in\mathcal{X},y\in\mathcal{Y}}$. This allows us to define general so called 'product vectors' of two vectors $u=\sum_{x\in\mathcal{X}}u_xe_x$ and $v=\sum_{y\in\mathcal{Y}}v_ye_y$ by
\begin{align}
u\otimes v:=\sum_{x\in\mathcal{X},y\in\mathcal{Y}}u_xv_{y}e_x\otimes e_{y}.
\end{align}
The vector space $\mathbb R^L\otimes\mathbb R^{L'}$ inherits the scalar product by the formula $\langle u\otimes v,x\otimes y\rangle:=\langle u,x\rangle\langle v,y\rangle$. The space of $L\times L'$ matrices is denoted by $M_{L\times L'}$. Given $A,B\in M_{L\times L'}$, we define $A\otimes B$ through its action on product vectors: 
\begin{align}
(A\otimes B)(u\otimes v):=(Au)\otimes(Bv).
\end{align}
In order to simplify notation later, for $u\in\mathbb R^{L_\mathcal{X}}$ and $n\in\nn$, we will use the shorthand $u^{\otimes n}:=u\otimes\ldots\otimes u$ for the $n$- fold tensor product of $u$ with itself. Accordingly, for $A\in M_{L_\mathcal{X}\times L_\mathcal{Y}}$, we write $A^{\otimes n}:=A\otimes\ldots\otimes A$.\\
\indent The influence of noise during the transmission of messages is modeled by stochastic matrices $W$ of conditional probability distributions $\left(w\left(y\vert x\right)\right)_{x\in\mathcal{X},y\in\mathcal{Y}}$, whose entries satisfy $\forall x\in\mathcal{X}:w\left(\cdot\vert x\right)\in\cP(\mathcal{Y})$. Any such matrix is henceforth also called a channel. The set of channels acting on a finite alphabet $\mathcal{X}$ of size $L_\mathcal{X}$ and $\mathcal{Y}$ of size $L_\mathcal{Y}$ is denoted by $\cC(\mathcal{X},\mathcal{Y})$. The special case where $\forall x\in\mathcal{X}$, $y\in\mathcal{Y}:\ w(y|x)=\delta(y,x)$ is denoted by $Id$. An important sub-class of channels arises for a binary alphabet ($L_\mathcal{X}=L_\mathcal{Y}=2$). In this case, every channel matrix $W$ is completely characterized by two parameters in the following sense:
\begin{align}
	W&=\begin{pmatrix}
	w_1 &w_2\\
	1-w_1 &1-w_2
	\end{pmatrix}.
\end{align}
If $w_2=1-w_1$, the resulting channel is called a 'Binary Symmetric Channel' (BSC). Any BSC with crossover probability $1-w$, for which we use the shorthand $BSC(w)$, is completely defined via the BSC-parameter $w$. A special case of a BSC is the bit-flip channel where $w=0$ and which is denoted by $\mathbb F$. In order to accurately model the influence of specific jamming strategies in a probabilistic framework, we introduce the notion of an Arbitrarily Varying Channel (AVC): The probabilistic law governing the transmission of codewords over a point-to-point AVC for $n$ channel uses is described by
\begin{align}\label{eq:AVC_general}
w^{\otimes n}(y^n\vert x^n,s^n)=\prod\limits_{i=1}^nw(y_i\vert x_i,s_i),
\end{align}
where $s^n=(s_1,\ldots,s_n)\in\mathcal{S}^n$ are the jammer's or adversarial's inputs controlling the state of the channel, $x^n=(x_1,\ldots,x_n)\in\mathcal{X}^n$ are the input codewords of the encoder and $y^n=(y_1,\ldots,y_n)\in\mathcal{Y}^n$ denotes the channel outputs at the decoder, all assumed to be taken from finite alphabets. The previously introduced notion naturally extends to products of AVCs. Let, for example, $K=2$ DMCs denoted by $W_1$ and $W_2$ mapping $\mathcal{X}_1$ to $\mathcal{Y}_1$ and $\mathcal{X}_2$ to $\mathcal{Y}_2$ respectively, both with transition probability matrices $(w_1(y_1\vert x_1))_{x_1\in\mathcal{X}_1,y_1\in\mathcal{Y}_1}$ and $(w_2(y_2\vert x_2))_{x_2\in\mathcal{X}_2,y_2\in\mathcal{Y}_2}$. Then, the transition probability matrix of $W_1\otimes W_2$ is defined by $w(y_1,y_2\vert x_1,x_2):=w_1(y_1\vert x_1)\cdot w_2(y_2\vert x_2)$, for all $x_1\in\mathcal{X}_1, x_2\in\mathcal{X}_2, y_1\in\mathcal{Y}_1, y_2\in\mathcal{Y}_2$. This notation can be adapted to arbitrarily varying channels:
	\begin{definition}[AVC]\label{def:AVC_single}	
	Let $\mathcal S,\mathcal X,\mathcal Y$ be finite sets. An arbitrarily varying channel (AVC) is a collection $\mathcal W:=(W_s)_{s\in\mathcal S}$ for which $W_s\in\mathcal C(\mathcal X,\mathcal Y)$ for every $s\in\mathcal S$. Alternative useful ways of writing the channel are $w(y|x,s):=w_s(y|x)$ or $W_s(\delta_x)=w_s(\cdot|x)$ for all $s\in\mathcal S$, $x\in\mathcal X$ and $y\in\mathcal Y$. The action of the AVC is completely described by the sequence $((W_{s^n})_{s^n\in\mathcal S^n})_{n\in\mathbb{N}}$ where $W_{s^n}=W_{s_1}\otimes\ldots\otimes W_{s_n}$.		
\end{definition}
	\definecolor{light-gray}{gray}{0.8}
	\definecolor{dark-gray}{gray}{0.4}
	\definecolor{extremelight-gray}{gray}{0.9}
	\definecolor{darkblue}{RGB}{0,0,90}
	\definecolor{lightblue}{RGB}{102,153,255}
	\setlength\abovecaptionskip{-1pt}
	\begin{figure}\centering\begin{tikzpicture}[thick,scale=0.85, every node/.style={transform shape}]
		\node (T2)  [fill=light-gray,rectangle, minimum width=2cm,minimum height=4.5cm] at (5,1.5) {\textcolor{black}{Central Unit}};
		\node (T3)  [fill=darkblue,rectangle,inner sep=6pt] at (1,0.0) {\textcolor{white}{MRF \#N}};
		\node (T4)  [fill=darkblue,rectangle,inner sep=6pt] at (1,2) {\textcolor{white}{MRF \#2}};
		\node (T5)  [fill=darkblue,rectangle,inner sep=6pt] at (1,3.0) {\textcolor{white}{MRF \#1}};
		\node (A1) [black, ellipse,rotate=90,inner sep=4.5pt] at (3,1) {{\Large ...}};
		\path[color=dark-gray,line width=1pt] (3.7,3) edge (2,3);
		\path[color=dark-gray,line width=1pt] (3.7,2) edge (2,2);
		\path[color=dark-gray,line width=1pt] (3.7,0) edge (2,0);
		\path[color=black,line width=1pt] (0,3) edge (-0.2,3);
		\path[color=black,line width=1pt] (-0.2,3) edge (-0.2,3.4);
		\path[color=black,line width=1pt] (-0.2,3.4) edge (-0.4,3.6);
		\path[color=black,line width=1pt] (-0.2,3.4) edge (0,3.6);
		\path[color=black,line width=1pt] (0,3.6) edge (-0.4,3.6);
		\path[color=black,line width=1pt] (0,2) edge (-0.2,2);
		\path[color=black,line width=1pt] (-0.2,2) edge (-0.2,2.4);
		\path[color=black,line width=1pt] (-0.2,2.4) edge (-0.4,2.6);
		\path[color=black,line width=1pt] (-0.2,2.4) edge (0,2.6);
		\path[color=black,line width=1pt] (0,2.6) edge (-0.4,2.6);
		\path[color=black,line width=1pt] (0,0) edge (-0.2,0);
		\path[color=black,line width=1pt] (-0.2,0) edge (-0.2,0.4);
		\path[color=black,line width=1pt] (-0.2,0.4) edge (-0.4,0.6);
		\path[color=black,line width=1pt] (-0.2,0.4) edge (0,0.6);
		\path[color=black,line width=1pt] (0,0.6) edge (-0.4,0.6);
		\node (T6)  [fill=light-gray,rectangle, minimum width=2cm,minimum height=4.5cm] at (-4,1.5) {\textcolor{black}{Transmitter}};
		\path[color=black,line width=1pt] (-2.8,0) edge (-2.5,0);
		\path[color=black,line width=1pt] (-2.5,0) edge (-2.5,0.4);
		\path[color=black,line width=1pt] (-2.5,0.4) edge (-2.7,0.6);
		\path[color=black,line width=1pt] (-2.5,0.4) edge (-2.3,0.6);
		\path[color=black,line width=1pt] (-2.7,0.6) edge (-2.3,0.6);
		\path[color=black,line width=1pt] (-2.8,2.0) edge (-2.5,2.0);
		\path[color=black,line width=1pt] (-2.5,2.0) edge (-2.5,2.4);				\path[color=black,line width=1pt] (-2.5,2.4) edge (-2.7,2.6);
		\path[color=black,line width=1pt] (-2.5,2.4) edge (-2.3,2.6);
		\path[color=black,line width=1pt] (-2.7,2.6) edge (-2.3,2.6);
		\path[color=black,line width=1pt] (-2.8,3) edge (-2.5,3);
		\path[color=black,line width=1pt] (-2.5,3) edge (-2.5,3.4);				\path[color=black,line width=1pt] (-2.5,3.4) edge (-2.7,3.6);
		\path[color=black,line width=1pt] (-2.5,3.4) edge (-2.3,3.6);
		\path[color=black,line width=1pt] (-2.7,3.6) edge (-2.3,3.6);
		\node (A1) [black, ellipse,rotate=90,inner sep=4.5pt] at (-2.5,1.3) {{\Large ...}};
		\path[color=dark-gray,line width=0.1pt, shorten <= 0pt] (-2.2,0.4) edge (-0.5,3.4);
		\path[color=dark-gray,line width=0.1pt, shorten <= 0pt] (-2.2,2.4) edge (-0.5,2.4);
		\path[color=dark-gray,line width=0.1pt, shorten <= 0pt] (-2.2,3.4) edge (-0.5,0.4);
		\path[color=dark-gray,line width=0.1pt, shorten <= 0pt] (-2.2,3.4) edge (-0.5,3.4);
		\path[color=dark-gray,line width=0.1pt, shorten <= 0pt] (-2.2,2.4) edge (-0.5,0.4);
		\path[color=dark-gray,line width=0.1pt, shorten <= 0pt] (-2.2,0.4) edge (-0.5,0.4);
		\path[color=dark-gray,line width=0.1pt, shorten <= 0pt] (-2.2,3.4) edge (-0.5,2.4);
		\path[color=dark-gray,line width=0.1pt, shorten <= 0pt] (-2.2,0.4) edge (-0.5,2.4);
		\path[color=dark-gray,line width=0.1pt, shorten <= 0pt] (-2.2,2.4) edge (-0.5,3.4);
		\node (T6)  [fill=lightblue,rectangle, minimum width=2cm,minimum height=1cm] at (-1,-1.5) {\textcolor{black}{Jammer}};
		\path[color=black,line width=1pt] (-1,-1) edge (-1,-0.5);
		\path[color=black,line width=1pt] (-1,-0.5) edge (-1.2,-0.3);
		\path[color=black,line width=1pt] (-1,-0.5) edge (-0.8,-0.3);
		\path[color=black,line width=1pt] (-1.2,-0.3) edge (-0.8,-0.3);
		\draw[bend right=30] (-0.7,-0.2) to (-1.3, -0.2);
		\draw[bend right=30] (-0.5,-0.05) to (-1.5, -0.05);
		\end{tikzpicture}\
		\caption{System Model of a communication link involving a Distributed Antenna System (DAS) on the receiver's end. Analog-to-digital conversion is done at the feeding point of the antenna at the Mobile Radio Front-ends (MRF). The digital data of multiple MRFs is merged by the central unit.}
		\label{fig:DAS_system_concept}
	\end{figure}
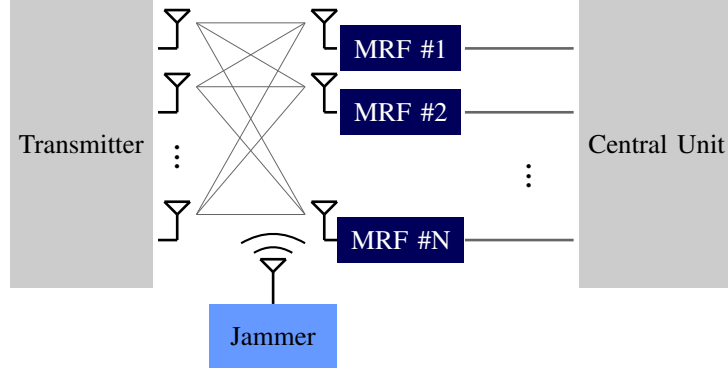
	In later analysis, we make use of the Shannon entropy of $p\in\mathcal{P}(\mathcal{X})$ which is defined as $H(p):=-\sum_{x\in\mathcal{X}}^{}p(x)\log(p(x))$. Every channel $W:\mathcal{P}(\mathcal{X})\mapsto\mathcal{P}(\mathcal{Y})$ together with a probability distribution $p\in\mathcal{P}(X)$ defines a joint distribution	$\mathcal{P}((X,Y)=(x,y))=p(x)w(y\vert x)$ for all $x\in\mathcal{X}$ and $y\in\mathcal{Y}$. Using the previously introduced notation, the mutual information which, by default, is defined as $I(X;Y)=H(X)-H(X\vert Y)$, can equivalently be written as $I(p;W):=I(X,Y)$.\\
	We concentrate on the communication over multiple, orthogonal channels using spatial or frequency diversity. In our scenario, the orthogonality assumption for spatial diversity is justified by the fact that multiple antennas are located in one and the same vehicle in a distributed manner resulting in a so-called Distributed Antenna System (DAS). The concept of DAS includes a central control unit which is connected to the Mobile Radio Front-ends (MRF) via Ethernet or optical fiber. A systematic overview of a DAS is provided in Figure~\ref{fig:DAS_system_concept}. The advantage of a DAS over a commonly used co-located antenna system is the low correlation between the single antenna elements resulting in 'approximately' independent communication channels. To ensure orthogonality in frequency, we assume that no inter-carrier interference occurs in the applied orthogonal frequency-division multiplexing scheme. To model the communication over multiple AVCs, we extend the definition of a single AVC to a composite AVC accounting for the joint use of multiple AVCs in parallel. In our analysis, we distinguish between two different types of composite AVCs: the composite orthogonal AVC and the composite independent AVC according to the respective precoding strategy. The composite orthogonal AVC is characterized by the joint use of multiple orthogonal AVCs allowing for input tuples of the form $x=(x_1,\ldots,x_K)$. In contrast, the composite independent AVC is composed of $K$ independent AVCs all with the same input symbol, that is, setting $x_1=\ldots x_K=x$. It can be deduced that the composite independent AVC is a special case of the composite orthogonal AVC.\\ 
	\begin{definition}[Composite orthogonal AVC]\label{def:Composite_orthogonal_AVC}
	Let the AVCs $\mathcal{W}_i$ with $i\in[K]$ with state set $\mathcal{S}_1,\ldots,\mathcal{S}_K$ be given. We define the composite orthogonal AVC by
	$\mathfrak W:=\otimes_{i=1}^K \mathcal W_i$. It holds
	\begin{align}
	\mathfrak W(y_1,\ldots,y_K\vert x_1,\ldots,x_K,s_1,\ldots,s_K):=\prod\limits_{i=1}^Kw_i(y_i\vert x_i,s_i).
	\end{align}
	\end{definition}
	In order to simplify our analysis, we consider a diversity scheme based on trivial precoding, that is, the same input symbol $x$ is transmitted over all AVCs simultaneously. This implies that $\mathcal{X}_1\times\ldots\times\mathcal{X}_K=\mathcal{X}$ and defines a composite AVC arising from the joint use of $K$ independent AVCs.
	\begin{definition}[Composite independent AVC]\label{def:Composite_independent_AVC}
	Let the AVCs $\mathcal{W}_i$ with $i\in[K]$ with state set $\mathcal{S}_1,\ldots,\mathcal{S}_K$ be given. We define the composite independent AVC by $\mathfrak W_\mathrm{CI}=\mathfrak W\circ (Id\otimes E)$, where $E:\mathcal X\to\mathcal X_1\times\ldots\times\mathcal X_K$ is defined by $e(x_1,\ldots,x_K|x):=\prod_{i=1}^K\delta(x_i,x)$. Furthermore, it holds
	\begin{align}
		w_\mathrm{CI}(y_1,\ldots,y_K|x,s_1,\ldots,s_K)=\prod_{i=1}^Kw_i(y_i|x,s_i)
	\end{align}
	\end{definition}
	Due to the high mobility in vehicular connectivity, the antennas deployed on the bodywork may suffer under fast varying changes in channel states provoked by blocking of the electromagnetic waves and/or interference from other users competing for spectral resources. Though the antennas in a vehicular DAS are spaced sufficiently far apart to justify the assumption of independent channel conditions in terms of channel noise, the change in channel state caused by the fast changing blocking and interference effects occurs approximately simultaneously at all the MRFs. To account for this practical conditions, we limit the jammer in the AVC model by imposing additional constraints on its strategy. This results in a local state-constrained composite independent AVC. For this reason, we assume that the jammer can only use a single input jointly controlling all $K$ channels, meaning that $s_K=\ldots=s_1$ in Definition~\ref{def:Composite_orthogonal_AVC} and Definition~\ref{def:Composite_independent_AVC}. This is a realistic assumption for communication models where the probabilities in \eqref{eq:AVC_general} are derived from linear superposition of electromagnetic waves (in frequency or time) at the receive antennas and the Jammer is not able to adjust its channel input individually to the $K$ diversity branches.
	\begin{definition}[Local state-constrained composite independent AVC] Let $K\in\mathbb{N}$ and let $\mathcal{W}_1,\ldots,\mathcal{W}_K$ be AVCs with the input alphabet $\mathcal X$, state sets $\mathcal S_1,\ldots,\mathcal S_K$ and output alphabets $\mathcal Y_1,\ldots,\mathcal Y_K$. A local state-constrained composite independent AVC arising from these channels is defined via selecting a subset $\mathcal S\subset\mathcal S_1\times\ldots\times\mathcal S_K$ and setting
		\begin{align}
		\mathfrak W_{CI,lo-c}:=(W_{1,s_1}\otimes\ldots\otimes W_{K,s_K})_{s^K\in\mathcal S}.
		\end{align}
	A special case of the local state-constrained AVC is the identical state-constrained composite independent AVC $\mathfrak{W}_{CI,id-c}$ where $\mathcal S_1=\ldots=\mathcal S_K$ and $\mathcal S:=\{s^K:s_1=\ldots=s_K\}$. 
	\end{definition}
	\begin{remark}
		If an identical state-constrained jammer uses a probabilistic strategy where it selects its input at random according to a distribution $q\in\mathcal{P}(\mathcal{S})$, the effective distribution at the input of the identical state-constrained channel is 
		\begin{align}
		q^{(K)}:=\sum_{s\in\mathcal{S}}q(s)\delta_s^{\otimes K}.
		\end{align} 
		In addition, the set of all previously defined input distributions $q^{(K)}$ is defined in the following way:
		\begin{align}
		\mathcal{P}^{(K)}(\mathcal{S}):=\{q^{(K)}:q\in\mathcal{P}(\mathcal{S})\}.
		\end{align}
		Restricting the jammer to inputs of this form allows us to cast our analysis into the framework of dependent component analysis \cite{7572096} during the proof of Theorem~\ref{thm:sim_symm_BSC}.
	\end{remark} 
	\begin{remark}
	The identical state constraint can trivially be transferred to the composite orthogonal AVC. In this case, the identical state-constrained composite orthogonal AVC refers to the practical scenarios 1. and 2. mentioned at the end of the introduction. In contrast, the identical state-constrained composite independent AVC refers to the practical scenarios 3. and 4.
	\end{remark}
	Figure~\ref{fig:system_model} shows a block diagram of the previously discussed composite independent AVC under i.s.c. jamming. Following \cite{2627}, we use a power constraint in this contribution limiting the jammer's strategy in Theorem~\ref{thm:super_activ_sc_pw}.
	\begin{definition}[Jammer under state constraint]
	Let $K\in\mathbb N$ and $\mathcal{W}_1,\ldots,\mathcal{W}_K$ be AVCs with a local constraint set $\mathcal S$. Let $l:\mathcal S\to[0,\infty)$ and $\Lambda>0$. The set of states for a power constrained jammer is restricted to $\mathcal S_\Lambda^n:=\{s^n\in\mathcal S^n:\sum_{i=1}^nl(s_i)\leq\Lambda\cdot n\}$.
	\end{definition}
	Likewise, we limit the input strategy by the following definition.
	\begin{definition}[Input power constraint]
	Let $\mathcal{W}$ be an AVC with input alphabet $\mathcal X$. Let $g:\mathcal X\to[0,\infty)$ and $\Gamma>0$. The set of power-constrained input sequences is the restricted set $\mathcal X_\Gamma^n:=\{x^n\in\mathcal X^n:\sum_{i=1}^ng(x_i)\leq\Gamma\cdot n\}$.
	\end{definition}
	\begin{remark}
		We restrict ourselves to binary AVCs, that is, $\mathcal X=\mathcal Y=\{0,1\}$ with two states $\mathcal S=\{1,2\}$ and use $l(s):=s-1$ and $g(p)=p$ as cost functions. 
	\end{remark}
	\setlength\abovecaptionskip{-1pt}
	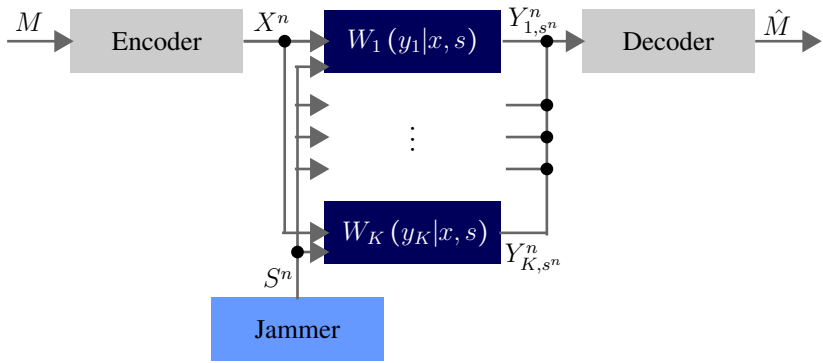
\begin{figure}\centering\begin{tikzpicture}[thick,scale=0.85, every node/.style={transform shape}]
		\node (T1)  [fill=light-gray,rectangle,inner sep=10pt, text width=2cm, align=center] at (0,3) {{Encoder}};
		\node (T2)  [fill=light-gray,rectangle,inner sep=10pt, text width=2cm, align=center] at (8,3) {{Decoder}};
		\node (T3)  [fill=darkblue,rectangle,inner sep=8pt, text width=2.2cm, align=center] at (4,3) {\textcolor{white}{$W_1\left(y_1\vert x,s\right)$}};
		\node (T4)  [fill=darkblue,rectangle,inner sep=8pt, text width=2.2cm, align=center] at (4,0) {\textcolor{white}{$W_K\left(y_K\vert x,s\right)$}};
		\node (T2)  
		[fill=lightblue,rectangle,inner sep=10pt, text width=2cm, align=center] at (2.2,-1.5) {{Jammer}};
		\path[color=dark-gray,line width=1pt, shorten <= -1pt] (2,3) edge (2,0);
		\path[-triangle 60, color=dark-gray,line width=1pt, shorten <= -1pt] (2,0) edge (2.7,0);
		\path[color=dark-gray,line width=1pt, shorten <= -1pt] (5.4,0) edge (6.1,0);
		\path[color=dark-gray,line width=1pt, shorten <= -1pt] (6.1,0) edge (6.1,3);
		\path[-triangle 60, color=dark-gray,line width=1pt, shorten <= -1pt] (1.4,3) edge (2.7,3);
		\path[-triangle 60, color=dark-gray,line width=1pt, shorten <= -1pt] (1.4,3) edge (2.7,3);
		\path[-triangle 60, color=dark-gray,line width=1pt, shorten <= -1pt] (5.45,3) edge (6.7,3);
		\path[-triangle 60, color=dark-gray,line width=1pt, shorten <= -1pt] (-2.3,3) edge (-1.3,3);
		\path[-triangle 60, color=dark-gray,line width=1pt, shorten <= -1pt] (9.4,3) edge (10.4,3);
		\path[color=dark-gray,line width=1pt, shorten <= -1pt] (2.2,-1) edge (2.2,2.6);
		\path[-triangle 60, color=dark-gray,line width=1pt, shorten <= -1pt] (2.2,2.6) edge (2.7,2.6);
		\path[-triangle 60, color=dark-gray,line width=1pt, shorten <= -1pt] (2.2,-0.3) edge (2.7,-0.3);
		\path[-triangle 60, color=dark-gray,line width=1pt, shorten <= -1pt] (2.2,2) edge (2.7,2);
		\path[-triangle 60, color=dark-gray,line width=1pt, shorten <= -1pt] (2.2,1.5) edge (2.7,1.5);
		\path[-triangle 60, color=dark-gray,line width=1pt, shorten <= -1pt] (2.2,1) edge (2.7,1);
		\path[color=dark-gray,line width=1pt, shorten <= -1pt] (5.5,1) edge (6.1,1);
		\path[color=dark-gray,line width=1pt, shorten <= -1pt] (5.5,1.5) edge (6.1,1.5);
		\path[color=dark-gray,line width=1pt, shorten <= -1pt] (5.5,2) edge (6.1,2);
		\node (Y1) [fill=black,ellipse,inner sep=2pt] at (2,3) {};
		\node (Y2) [fill=black,ellipse,inner sep=2pt] at (6.1,3) {};
		\node (Y3) [fill=black,ellipse,inner sep=2pt] at (2.2,-0.3) {};
		\node (Y4) [fill=black,ellipse,inner sep=2pt] at (6.1,1) {};
		\node (Y5) [fill=black,ellipse,inner sep=2pt] at (6.1,1.5) {};
		\node (Y6) [fill=black,ellipse,inner sep=2pt] at (6.1,2) {};
		\node (C1) at (1.9,-0.7)[text width=2cm,align=center] {\textcolor{black}{{$S^n$}}};
		\node (C2) at (1.8,3.3)[text width=2cm,align=center] {\textcolor{black}{{$X^n$}}};
		\node (C3) at (5.9,3.3)[text width=2cm,align=center] {\textcolor{black}{{$Y_{1,s^n}^n$}}};
		\node (C4) at (5.9,-0.4)[text width=2cm,align=center] {\textcolor{black}{{$Y_{K,s^n}^n$}}};
		\node (C4) at (-2,3.3)[text width=2cm,align=center] {\textcolor{black}{{$M$}}};
		\node (C4) at (9.7,3.3)[text width=2cm,align=center] {\textcolor{black}{{$\hat M$}}};
		\node[label=below:\rotatebox{-90}{\ldots}] at (4,2) {};
		\end{tikzpicture}
		\caption{Block diagram for an identical state-constrained composite independent AVC. The state of all $K$ parallel channels is controlled by the same state sequence $s^n\in\mathcal S^n$ chosen by the jammer.}
		\label{fig:system_model}
		\vspace{-10pt}
	\end{figure}
\end{section}
\begin{section}{Code Concepts and Performance Measures}\label{sec:code_concepts_and_perf_measures}
	In the following, we define code, rate and capacity for communicating over the i.s.c. composite independent AVC. The definitions can be generalized to the setting of i.s.c. composite orthogonal AVCs, replacing the input alphabet $\mathcal{X}$ by $\mathcal{X}_1\times\ldots\times\mathcal{X}_K$ allowing for more flexibility in precoding.
	\begin{definition}[Unassisted code]
	An unassisted (deterministic) code $\mathcal{M}_n$ for the i.s.c. composite independent AVC $\mathcal{W}\subset\mathcal{C}(\mathcal{X}\times\mathcal{S},\mathcal{Y}_1\times\ldots\times\mathcal{Y}_K)$ with $K\in\mathbb{N}$ consists of: a set $[M]$ of messages and a deterministic encoder $f:[M]\to\mathcal{X}^n$ in combination with a collection $\{\mathcal{D}_m\}_{m=1}^{M}$ of decoding subsets $\mathcal{D}_m\subset\mathcal{Y}_1\times\ldots\times\mathcal{Y}_K$ for which $\mathcal{D}_m\cap \mathcal{D}_{m'}=\emptyset$ for every $m\neq m'$. The average error of the code $\mathcal{M}_n$ is given by
	\begin{align*}
	\overline{e}_\mathrm{UA}(\mathcal{M}_n)=\max\limits_{s^n\in\mathcal{S}^n}\frac{1}{M}\sum\limits_{m=1}^{M}w^{\otimes n}(\mathcal{D}^c_m\vert f(m),s^n).
	\end{align*} 
	\end{definition}
		Advanced encoding and decoding schemes can be provoked by access to an additional coordination resource, that is, a random variable $\Gamma$ (common randomness) shared by the legitimate communication parties. Transmitter and Receiver can make use of $\Gamma$ to coordinate their choice of en- and decoders in order to ensure reliable communication avoiding symmetrizability.
	\begin{definition}[Common randomness-assisted code]\label{def:CR_ass_code}
		A common randomness-assisted code $\mathcal{M}_n$ for the i.s.c. composite independent AVC $\mathcal{W}\in\mathcal{C}(\mathcal{X}\times\mathcal{S},\mathcal{Y}_1\times\ldots\times\mathcal{Y}_K)$ with $K\in\mathbb{N}$ consists of: a set $[M]$ of messages, a set of random outcomes of a source of common randomness $[\Gamma]$ and a set of stochastic encoders $E^{\gamma}\in\mathcal{C}([M],\mathcal{X}^n)$ in combination with a collection $\{D_m^\gamma\}_{m,\gamma=1}^{M,\Gamma}$ of decoding subsets $D_m^\gamma$ for which $D_m^\gamma\cap D_{m'}^\gamma=\emptyset$ for all $\gamma\in[\Gamma]$, whenever $m\neq m'$. The average error of the CR-assisted code $\mathcal{M}_n$ is given by
		\begin{align}
			\overline{e}_{RA}(\mathcal{M}_n)=1-\min_{s^n\in\mathcal{S}^n}\frac{1}{M\cdot\Gamma}\sum\limits_{m,\gamma=1}^{M,\Gamma}e^\gamma(x^n\vert m)w_{s^n}(D_m^\gamma\vert x^n).
		\end{align}
	\end{definition}
	\begin{definition}[Achievable rate]
		A non-negative number $R$ is called an \textit{achievable rate} for the i.s.c. composite independent AVC $\mathcal{W}_{CI,id-c}\in\mathcal{C}(\mathcal{X}\times\mathcal{S},\mathcal{Y}_1\times\ldots\times\mathcal{Y}_K)$ with $K\in\mathbb{N}$ under average error criterion, if for every $\epsilon>0$ and $\delta>0$ and $n$ sufficiently large, there exists an unassisted code $\mathcal{M}_n$ such that $\frac{\log\vert \mathcal{M}_n\vert}{n}>R-\delta$, and $\overline{e}_{UA}<\epsilon$.\\
		Equally, an achievable rate $R$ for a CR-assisted code $\mathcal{M}_n$ for $\mathcal{W}_{CI,id-c}\in\mathcal{C}(\mathcal{X}\times\mathcal{S},\mathcal{Y}_1\times\ldots\times\mathcal{Y}_K)$ with $K\in\mathbb{N}$ under average error criterion is given, if for every $\epsilon>0$ and $\delta>0$ and $n$ sufficiently large, there exists a a CR-assisted code $\mathcal{M}_n$ such that $\frac{\log\vert \mathcal{M}_n\vert}{n}>R-\delta$, and $\overline{e}_{RA}<\epsilon$.
	\end{definition}
	\begin{definition}[Capacity]
		Let $K\in\mathbb{N}$. Given $K$ independent AVCs $\mathcal{W}_1,\ldots,\mathcal{W}_K\in\mathcal{C}(\mathcal{X}\times\mathcal{S},\mathcal{Y})$, the deterministic capacity of the composite independent AVC under identical input constrained jammer $\mathfrak W_{CI,id-c}$ is defined as
		\begin{align}
		C_d(\mathfrak W_{CI,id-c}):=\sup\left\{R:\begin{array}{l}R\mathrm{\ is\ an\ achievable\ rate\ for\ a\ deterministic\ coding}\\
		\mathrm{scheme\ under\ state\ constraint\ }\mathcal{S}=\{s^K:s_1=\ldots=s_K\}\end{array}\right\}.
		\end{align}
		The CR-assisted capacity of $\mathfrak W_{CI,id-c}$ is defined as
		\begin{align}
		C_r(\mathfrak W_{CI,id-c}):=\sup\left\{R:\begin{array}{l}R\mathrm{\ is\ an\ achievable\ rate\ for\ a\ CR-assisted\ coding}\\
		\mathrm{scheme\ under\ state\ constraint\ }\mathcal{S}=\{s^K:s_1=\ldots=s_K\}\end{array}\right\}.
		\end{align}
		Furthermore, we define the deterministic capacity of the composite independent AVC under state constraint $\Lambda$ and input power constraint $\Gamma$
		\begin{align}
		C_d(\mathfrak W_{CI,id-c},\Lambda,\Gamma):=\sup\left\{R:\begin{array}{l}R\mathrm{\ is\ an\ achievable\ rate\ for\ a\ deterministic\ coding}\\
		\mathrm{scheme\ under\ state\ constraint\ }\mathcal S=\{s^K:s_1=\ldots=s_K\}\\
		\mathrm{and\ power\ constraints}\ \Lambda\ and\ \Gamma.\end{array}\right\}.
		\end{align}
		Likewise, the CR-assisted capacity under power constraints is defined in the following way
		\begin{align}
		C_r(\mathfrak W_{CI,id-c},\Lambda,\Gamma):=\sup\left\{R:\begin{array}{l}R\mathrm{\ is\ an\ achievable\ rate\ for\ a\ common\ randomness}\\
		\mathrm{assisted\ coding\ scheme\ under\ state\ constraint\ }\\
		\mathcal S=\{s^K:s_1=\ldots=s_K\}\mathrm{\ and\ power\ constraints}\ \Lambda\ and\ \Gamma.\end{array}\right\}.
		\end{align}
	\end{definition}
	For our notion of symmetrizability, we stick to the standard definition in the context of arbitrarily varying channels which can, for example, be found in \cite{720535}. 
		\begin{definition}[Symmetrizability]\label{def:sim_symm}
		An AVC $\mathcal{W}\in\mathcal{C}(\mathcal{X}\times\mathcal{S},\mathcal{Y})$ is called symmetrizable, if for some $U\in\mathcal{C}(\mathcal{X},\mathcal{S})$,
		\begin{align}
		\sum\limits_{s\in\mathcal{S}}w(y\vert x,s)u(s\vert x')=	\sum\limits_{s\in\mathcal{S}}w(y\vert x',s)u(s\vert x),
		\end{align}
		for every $x,x'\in\mathcal{X}$, $y\in\mathcal{Y}$.
		\end{definition}
	In \cite{605589} and \cite{612938}, this definition was adapted to Arbitrarily Varying Multiple Access Channels (AVMAC) developing the notion of partial symmetrizability.\\
	For transferring the results concerning super-activation from the composite independent AVC to the composite orthogonal AVC, we make use of the following function, which was originally introduced in a quantum setting in \cite{doi:10.1063/1.4902930} and \cite{6874891} and later used to investigate classical communication channels in \cite{7447794}, to quantify the distance of an AVC from being symmetrizable.
	\begin{definition}\label{def:function_F} The function $F:\mathcal C(\mathcal{X}\times\mathcal{S},\mathcal{Y})\to\mathbb R_+$ is defined via setting
		\begin{align}\label{eqn:definition-of-F}
		F(\mathfrak W'):=\min_{U\in\mathcal{C}(\mathcal X,\mathcal S)}\max_{x\neq x'}\left\|\sum_{s\in\mathcal S}W'(\delta_{x'}\otimes\delta_s)u(s\vert x)-\sum_{s\in\mathcal S}W'(\delta_x\otimes\delta_s)u(s\vert x')\right\|_1,
		\end{align}
	for each $\mathfrak{W}'=(w'(\cdot\vert\cdot,s))_{s\in\mathcal S}\in \mathcal C(\mathcal{X}\times\mathcal{S},\mathcal{Y})$.
	\end{definition}
\begin{remark}
	Comparing Definition~\ref{def:sim_symm} and Definition~\ref{def:function_F}, it is obvious that $F(\mathcal{W}')=0$ is equivalent to $\mathcal{W}$ being symmetrizable.
\end{remark}
	Next, we provide a formal definition for the phenomenon of super-activation.
\begin{definition}[Super-activation]\label{def:super_act}
	Let the composite independent AVC $\mathfrak{W}_{CI,id-c}$ be given with $\mathcal{W}_1,\dots,\mathcal{W}_K\in\mathcal{C}(\mathcal{X}\times\mathcal{S},\mathcal{Y})$, $K\in\nn$. We say that the capacity $C$ of $\mathfrak W_{CI}$ shows super-activation if we can find $\mathcal{W}_1,\dots\mathcal{W}_K$ such that
	\begin{align}
	C(\mathcal{W}_1)=C(\mathcal{W}_2)=\dots=C(\mathcal{W}_K)=0,
	\end{align}  
	but 	
	\begin{align}
	C(\mathcal{W}_1\otimes\dots\otimes\mathcal{W}_K)>0.
	\end{align}
\end{definition}
	In the following, we stick to AVBSCs and show that the use of three independent AVBSCs, or alternatively, three orthogonal AVBSCs in a composite setting is already sufficient to circumvent symmetrizability except in some special, but trivial and practically irrelevant cases.
\end{section}
\begin{section}{Main Results}\label{sec:main_result}
	In the main part of this contribution, we precisely concentrate on the influence of receive diversity on the performance of the communication over AVCs. Our main results show why the deployment of a third antenna or, alternatively, the use of three orthogonal frequencies in a diversity fashion, can be an essential aspect for reliability in arbitrarily varying communication scenarios, especially if the communication channels are AVBSCs.
\begin{subsection}{Symmetrizability and Capacity of the i.s.c. Composite Independent AVBSC}\label{subsec:symm_and_cappa_res}
	\begin{theorem}[Non-symmetrizability for i.s.c. composite independent AVBSCs]\label{thm:sim_symm_BSC}
	Let $K=3$, $\mathcal X=\mathcal Y=\{0,1\}$ and $\mathcal S=[2]$. Let $\mathfrak{W}_\mathrm{CI,id-c}$ denote the i.s.c composite independent AVBSC consisting of $\mathcal{W}_1=\{W_{1,1},W_{1,2}\}$, $\mathcal{W}_2=\{W_{2,1},W_{2,2}\}$ and $\mathcal{W}_3=\{W_{3,1},W_{3,2}\}$. Associated to each channel state $W_{i,j}$ is the corresponding BSC-parameter $w_{i,j}$. Let, in addition, $w_{i,1}\neq w_{i,2}$ and $w_{i,j}\neq\nicefrac{1}{2}$ for all $i\in[3],j\in[2]$ and $w_{1,1}\neq1-w_{1,2}\lor w_{2,1}\neq1-w_{2,2}\lor w_{3,1}\neq1-w_{3,2}$. Then $\mathfrak{W}_\mathrm{CI,id-c}$ is not symmetrizable according to Definition~\ref{def:sim_symm}.
\end{theorem}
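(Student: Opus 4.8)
The plan is to show that symmetrizability of $\mathfrak{W}_\mathrm{CI,id-c}$ would force an algebraic relation among the branch parameters that the hypotheses rule out, the key device being the binary (Walsh--Hadamard) transform, under which the product structure of the BSCs becomes multiplicative. Since $\mathcal X=\{0,1\}$ and the effective state set is $\mathcal S=\{1,2\}$, a symmetrizing $U\in\mathcal C(\mathcal X,\mathcal S)$ is determined by $a:=u(1\vert 0)$ and $b:=u(1\vert 1)$. For $s\in\{1,2\}$ let $p_s\in\mathcal P(\{0,1\}^3)$ be the product measure $p_s(y):=\prod_{i=1}^3 w_i(y_i\vert 0,s)$ (three independent bits, the $i$-th equal to $0$ with probability $w_{i,s}$), and let $\tilde p_s(y):=p_s(\bar y)$ be its bitwise complement. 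Because each branch is a BSC, $w_i(y_i\vert 1,s)=w_i(\bar y_i\vert 0,s)$, so the composite transition law satisfies $\prod_i w_i(y_i\vert 1,s)=\tilde p_s(y)$. The first step is to check that, as $x\neq x'$ leaves only the pairs $(0,1)$ and $(1,0)$, Definition~\ref{def:sim_symm} for $\mathfrak{W}_\mathrm{CI,id-c}$ is equivalent to the single identity in $\mathbb R^{\{0,1\}^3}$
\begin{align}\label{eq:star}
b\,p_1+(1-b)\,p_2=a\,\tilde p_1+(1-a)\,\tilde p_2 .
\end{align}

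Next, expand \eqref{eq:star} in the Walsh basis. With $Z_i:=(-1)^{y_i}$, $m_i:=2w_{i,1}-1$ and $n_i:=2w_{i,2}-1$, independence gives $\mathbb E_{p_1}[\prod_{i\in I}Z_i]=\prod_{i\in I}m_i$ and $\mathbb E_{p_2}[\prod_{i\in I}Z_i]=\prod_{i\in I}n_i$, while bit complementation flips each $Z_i$, so $\mathbb E_{\tilde p_s}[\prod_{i\in I}Z_i]=(-1)^{|I|}\mathbb E_{p_s}[\prod_{i\in I}Z_i]$ for every $I\subseteq[3]$. Since $\{\prod_{i\in I}(-1)^{y_i}\}_{I\subseteq[3]}$ is a basis of $\mathbb R^{\{0,1\}^3}$, \eqref{eq:star} is equivalent to the scalar system
\begin{align}\label{eq:starI}
b\prod_{i\in I}m_i+(1-b)\prod_{i\in I}n_i=(-1)^{|I|}\Big(a\prod_{i\in I}m_i+(1-a)\prod_{i\in I}n_i\Big),\qquad I\subseteq[3] .
\end{align}
The hypotheses become: $m_i\neq 0$, $n_i\neq 0$ and $m_i\neq n_i$ for each $i$, and $m_i\neq -n_i$ for at least one $i$ (using $w_{i,1}=1-w_{i,2}\iff m_i=-n_i$).

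The statement should then follow from \eqref{eq:starI} with $|I|=1$ and $|I|=3$ alone. Set $t:=a+b$. The three $|I|=1$ equations read $t\,m_i+(2-t)\,n_i=0$; as $m_i,n_i\neq 0$ this forces $t\notin\{0,2\}$ and $m_i=c\,n_i$ for the single constant $c:=\tfrac{t-2}{t}\neq 1$, hence $t=\tfrac{2}{1-c}$. The $I=[3]$ equation reads $t\,m_1m_2m_3+(2-t)\,n_1n_2n_3=0$; inserting $m_i=c\,n_i$ and cancelling $n_1n_2n_3\neq 0$ gives $t\,c^3+(2-t)=0$, i.e.\ $t=\tfrac{2}{1-c^3}$. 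Comparing, $c^3=c$, so $c\in\{-1,0,1\}$; but $c=1$ is excluded, $c=0$ would give $m_i=0$, and $c=-1$ would give $m_i=-n_i$ for \emph{every} $i$ --- each contradicting a hypothesis. Thus \eqref{eq:star} is unsolvable and $\mathfrak{W}_\mathrm{CI,id-c}$ is not symmetrizable.

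I expect the main obstacle to be not any hard computation but the two bookkeeping reductions: verifying that both $x\neq x'$ cases of Definition~\ref{def:sim_symm} collapse to the single identity \eqref{eq:star}, and that passing to Walsh coefficients faithfully encodes \eqref{eq:star} as \eqref{eq:starI}, with the sign $(-1)^{|I|}$ from bit complementation tracked correctly. The conceptual heart is that the $|I|=1$ relations already pin the common ratio $c=m_i/n_i$, so the $|I|=3$ relation becomes a genuine one-variable constraint with no stochastic slack left --- which is exactly where $K=3$ rather than $K=2$ is needed, and is the point of contact with the dependent-component-analysis framework of \cite{7572096} mentioned above.
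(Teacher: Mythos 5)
Your proof is correct, but it takes a genuinely different route from the paper. Both arguments start from the same reduction: since $x=x'$ is trivial, symmetrizability collapses to the single vector identity $b\,p_1+(1-b)\,p_2=a\,\tilde p_1+(1-a)\,\tilde p_2$ (the paper writes this as $\sum_s q(s)\bigotimes_i W_{i,s}(\delta_0)=\sum_s q'(s)\bigotimes_i W_{i,s}(\delta_1)$). From there the paper inverts the branch matrices, rewrites the condition as $q^{(3)}=\bigl(\bigotimes_i V_i^{-1}\circ\mathbb F\circ V_i\bigr)q'^{(3)}$, and invokes Theorem~1 of \cite{7572096} (dependent component analysis) to force each $X_i$ to be a common permutation, which must be $\mathbb F$, yielding the excluded case $w_{i,2}=1-w_{i,1}$ for all $i$; deterministic symmetrizers ($q(s)=q'(s')=1$) are handled separately. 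You instead expand the identity in the Walsh--Hadamard basis, where independence turns the constraint into the scalar system $t\prod_{i\in I}m_i+(2-t)\prod_{i\in I}n_i=0$ for odd $|I|$ (with $t=a+b$), and the $|I|=1$ equations pin a common ratio $m_i=c\,n_i$ while $I=[3]$ forces $c^3=c$, so $c\in\{-1,0,1\}$, each excluded by the hypotheses. Your route buys several things: it is fully self-contained (no appeal to \cite{7572096}), it treats the boundary cases $a,b\in\{0,1\}$ uniformly rather than as special cases (only $t\in\{0,2\}$ needs excluding, which $m_i,n_i\neq0$ does), it makes the role of the odd parity of $K=3$ completely transparent, and it in fact never uses the hypothesis $w_{i,1}\neq w_{i,2}$ (which the paper needs for invertibility of the $V_i$), so you prove a marginally stronger statement. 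What the paper's route buys is the structural connection to the DCA framework that the authors emphasize (and which explains conceptually why the exceptional flip-symmetric case is the unique obstruction), at the cost of relying on an external uniqueness theorem. One stylistic remark: since you only ever use the $|I|\in\{1,3\}$ equations, it is worth stating explicitly (as you implicitly do) that unsolvability of a subsystem already rules out any symmetrizer; also note that after excluding $t\in\{0,2\}$ you have $c=(t-2)/t<0$, so $c=-1$ is in fact the only surviving root, which shortens the final case split.
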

\begin{remark}
	Theorem~\ref{thm:sim_symm_BSC} implies that a third antenna can prevent the AVBSCs from symmetrizability and thus may evade zero capacity of the composite communication channel. At first, the exception seems to be a drawback, but on closer inspection, it has little practical relevance. In the exceptional case, the first channel state of each of the single AVBSCs is a flipped version of the second. This, however, seems rather unlikely to occur in practice. In addition, if a channel $BSC(p)$ has the property $p(1)=\nicefrac{1}{2}$, this channel completely randomizes every input. This is a situation that will typically trigger the strongest efforts towards improvement by other parts of the communication system. If $w_{i,1}= w_{i,2}$ for any $i\in[3]$, the jammer has no influence at all on the i-th communication link which can in this case be modeled by a simple DMC.
\end{remark}
Next, we have a look at the symmetrizability of the i.s.c. composite independent AVBSCs for $K=2$ and the symmetrizability of a single AVBSC ($K=1$).
\begin{theorem}[Symmetrizability of the composite independent AVBSC for $K=2$]\label{thm:symm_for_K=2}
	Let $\mathcal S=[2]$ and $K=2$. Let $\mathfrak{W}_{CI,id-c}$ denote the is.c. composite independent AVBSC consisting of $\mathcal{W}_1$ and $\mathcal{W}_2$ with states $W_{1,1},W_{1,2}$ and $W_{2,1},W_{2,2}$, respectively. Let $w_{2,2}=1-w_{2,1}$. Then $\mathfrak{W}_{CI,id-c}$ is symmetrizable according to Definition~\ref{def:sim_symm} for all tuples ($w_{1,1},w_{1,2}$) fulfilling $w_{1,2}=1-w_{1,1}$.
\end{theorem}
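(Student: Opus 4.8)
The plan is to exhibit one explicit symmetrizing channel $U\in\mathcal{C}(\{0,1\},\{1,2\})$ that works for every admissible choice of parameters at once, and to verify Definition~\ref{def:sim_symm} by hand. For an AVC with binary input the symmetrizability condition reduces to a single vector equation, namely the one for the pair $(x,x')=(0,1)$ ranging over all outputs $y\in\{0,1\}^2$ (the pairs $(0,0)$ and $(1,1)$ are vacuous, and $(1,0)$ just restates $(0,1)$). Since both single AVBSCs are \emph{flip pairs}, I expect the completely symmetric choice $u(s\vert x):=\tfrac12$ for all $s,x$ to symmetrize, and that is what I would try first.

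The one ingredient I would isolate is an elementary matrix-level reading of the flip hypothesis: $w_{i,2}=1-w_{i,1}$ is equivalent to $w_{i,2}(y\vert x)=w_{i,1}(y\vert 1-x)$ for all $x,y\in\{0,1\}$ and $i\in\{1,2\}$ --- moving the jammer to its second state acts on branch $i$ exactly like flipping the input bit. Because the composite independent AVC feeds the \emph{same} symbol $x$ to both branches via the encoder map $E$ of Definition~\ref{def:Composite_independent_AVC}, this lifts to
\[
w_\mathrm{CI}(y_1,y_2\vert x,2)=w_{1,2}(y_1\vert x)\,w_{2,2}(y_2\vert x)=w_{1,1}(y_1\vert 1-x)\,w_{2,1}(y_2\vert 1-x)=w_\mathrm{CI}(y_1,y_2\vert 1-x,1),
\]
while $w_\mathrm{CI}(y_1,y_2\vert x,1)=w_{1,1}(y_1\vert x)\,w_{2,1}(y_2\vert x)$ by definition.

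Feeding the uniform $U$ into Definition~\ref{def:sim_symm} then yields, entrywise in $y=(y_1,y_2)$,
\[
\sum_{s\in\mathcal S}w_\mathrm{CI}(y\vert x,s)\,u(s\vert x')=\tfrac12\bigl(w_{1,1}(y_1\vert x)w_{2,1}(y_2\vert x)+w_{1,1}(y_1\vert 1-x)w_{2,1}(y_2\vert 1-x)\bigr),
\]
which is manifestly invariant under $x\leftrightarrow 1-x$ and hence equals $\sum_{s}w_\mathrm{CI}(y\vert 1-x,s)\,u(s\vert x)$ --- exactly the symmetrizability identity for the only non-trivial case. This proves the theorem uniformly in $(w_{1,1},w_{1,2})$ with $w_{1,2}=1-w_{1,1}$, and the hypothesis $w_{2,2}=1-w_{2,1}$ enters identically for the second branch.

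I do not expect a genuine obstacle: the only points worth one line each are that the chosen $U$ is a legitimate stochastic matrix and that the check is entrywise in $y$, so it simultaneously covers all four output symbols. If one prefers to derive rather than guess $U$, parametrize it by $a=u(1\vert 0)$ and $b=u(1\vert 1)$, expand the $(0,1)$ equation, and observe that it collapses to $(a+b-1)\bigl(w_\mathrm{CI}(y\vert 0,1)-w_\mathrm{CI}(y\vert 1,1)\bigr)=0$ for every $y$, so that the entire segment $a+b=1$ symmetrizes --- the uniform choice being its midpoint. I would present the uniform-symmetrizer version for brevity.
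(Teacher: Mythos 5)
Your proof is correct for the statement as written, and it takes a genuinely different, more elementary route than the paper. You verify Definition~\ref{def:sim_symm} directly: the flip hypothesis $w_{i,2}=1-w_{i,1}$ is exactly the identity $w_{i,2}(y\vert x)=w_{i,1}(y\vert 1-x)$, which lifts to $w_{\mathrm{CI}}(y_1,y_2\vert x,2)=w_{\mathrm{CI}}(y_1,y_2\vert 1-x,1)$ because both branches see the same input and the same state, and then the uniform symmetrizer (or any $U$ with $u(1\vert 0)+u(1\vert 1)=1$) makes both sides of the symmetrizability equation equal to $\tfrac{1}{2}\bigl(w_{\mathrm{CI}}(y\vert 0,1)+w_{\mathrm{CI}}(y\vert 1,1)\bigr)$; your reduction to the single pair $(x,x')=(0,1)$ and the collapse of the general condition to $(a+b-1)\bigl(w_{\mathrm{CI}}(y\vert 0,1)-w_{\mathrm{CI}}(y\vert 1,1)\bigr)=0$ both check out. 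The paper argues quite differently, through its dependent-component-analysis machinery: it writes the symmetrizability condition as $q^{(2)}=(X_1\otimes \mathbb F)q'^{(2)}$ with $X_i=V_i^{-1}\circ\mathbb F\circ V_i$ (using $w_{2,2}=1-w_{2,1}$ to get $X_2=\mathbb F$), takes a partial trace to obtain $q'=\mathbb F(q)$, and concludes by component-wise comparison that $w_{1,2}=1-w_{1,1}$ \emph{must} hold. So the paper's proof in fact delivers the converse direction --- only flip pairs in the first branch admit a symmetrizer --- which is what the subsequent Example uses to identify $\mathcal U_2$ with the line $w_{1,2}=1-w_{1,1}$, whereas your argument supplies the sufficiency that the theorem literally states, does so with an explicit symmetrizer, avoids any invertibility or special-case analysis, and extends verbatim to any number $K$ of flip-pair branches (thereby also showing that the excluded case of Theorem~\ref{thm:sim_symm_BSC} genuinely is symmetrizable). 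If the theorem is read as an exact characterization of the symmetrizable tuples, your proof covers only the ``if'' half; for the statement as quoted, it is complete and cleaner than the paper's.
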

\begin{remark}
	In the special case where $w_{1,2}=1-w_{1,1}$, $K=2$ is already sufficient to prevent symmetrization except for the case in which it holds that $w_{2,2}=1-w_{2,1}$.
\end{remark}
\begin{theorem}[Symmetrizability of the AVBSC for $K=1$]\label{thm:symm_for_K=1}
	Let $\mathcal{S}=[2]$. Let $\mathcal{W}\in\mathcal{C}(\mathcal{X}\times\mathcal{S},\mathcal{Y})$ be an AVBSC with states $W_{1,1}$ and $W_{1,2}$. Let $w_{1,1}\in[0,\nicefrac{1}{2})$ be fixed. Then $\mathcal W$ is symmetrizable according to Definition~\ref{def:sim_symm} whenever $w_{1,2}\in(\nicefrac{1}{2},1]$, and the symmetrizing strategy may be chosen to be any channel $U\in \mathcal C(\mathcal X,\mathcal S)$ satisfying the equation
	\begin{align}
		u(2\vert 1)-u(1\vert 0)=(w_{1,1}+w_{1,2}-1)/(w_{1,2}-w_{1,1}).\label{eq:strategy_for_u}
	\end{align}
	 Note that it holds $(w_{1,1}+w_{1,2}-1)/(w_{1,2}-w_{1,1})\in(-1,1)$ for the given restrictions imposed on $w_{1,1}$ and $w_{1,2}$, so that many symmetrizers $U$ may exist. 
\end{theorem}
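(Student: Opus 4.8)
The plan is to reduce the symmetrizability condition of Definition~\ref{def:sim_symm} for this binary AVC to a single scalar linear equation in the entries of the symmetrizer, and then read off its solvability inside the probability simplex. First I would note that in Definition~\ref{def:sim_symm} the instances with $x=x'$ hold trivially, and that the instance with $y=1$ follows automatically from the one with $y=0$: summing the two equations over $y$ yields $\sum_{s}u(s|x')=\sum_{s}u(s|x)=1$ on both sides, because $w(\cdot|x,s)\in\mathcal{P}(\mathcal{Y})$ for every $x,s$. Hence the whole condition is carried by the single equation coming from $(x,x',y)=(0,1,0)$. Since the state-$s$ channel is $BSC(w_{1,s})$, so that $w(0|0,s)=w_{1,s}$ and $w(0|1,s)=1-w_{1,s}$, this equation reads
\[
w_{1,1}\,u(1|1)+w_{1,2}\,u(2|1)=(1-w_{1,1})\,u(1|0)+(1-w_{1,2})\,u(2|0).
\]

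Next I would parametrize $U\in\mathcal{C}(\mathcal{X},\mathcal{S})$ by its two free entries $p:=u(2|1)\in[0,1]$ and $q:=u(1|0)\in[0,1]$, so that $u(1|1)=1-p$ and $u(2|0)=1-q$. Substituting these into the displayed identity and collecting terms makes it linear; as $w_{1,1}<\tfrac12<w_{1,2}$ we have $w_{1,2}-w_{1,1}\neq0$, and dividing through by this factor turns the identity into exactly \eqref{eq:strategy_for_u}, i.e.\ $p-q=(w_{1,1}+w_{1,2}-1)/(w_{1,2}-w_{1,1})$. Thus symmetrizability by $U$ is equivalent to this one affine constraint on the pair $(p,q)$.

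It remains to show that \eqref{eq:strategy_for_u} admits a solution with $p,q\in[0,1]$. Denoting the right-hand side of \eqref{eq:strategy_for_u} by $c$, a two-line case distinction gives $c\in(-1,1)$: if $w_{1,1}+w_{1,2}\geq1$ then $|c|<1\iff w_{1,1}<\tfrac12$, and if $w_{1,1}+w_{1,2}<1$ then $|c|<1\iff w_{1,2}>\tfrac12$, both of which hold by assumption. Given $c\in(-1,1)$, the line $\{(p,q):p-q=c\}$ meets $[0,1]^2$ in a segment of positive length — for instance $p=(1+c)/2$ and $q=(1-c)/2$ both lie in $(0,1)$ — so a valid symmetrizing channel $U$ exists, and in fact a one-parameter family of them, which is the final assertion of the theorem. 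There is no substantial difficulty in this argument; the only point that needs attention is keeping the chosen entries of $U$ inside the simplex, and that is precisely what the bound $c\in(-1,1)$ secures.
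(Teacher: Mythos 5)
Your overall route is the same as the paper's: you plug the BSC structure into Definition~\ref{def:sim_symm}, note that only the instance $(x,x',y)=(0,1,0)$ carries information (the $y=1$ equation following by normalization), and then argue solvability of one affine constraint inside the simplex. In fact you are more explicit than the paper's proof, which writes the two equations for $y=0$ and $y=1$ and simply asserts that they can be satisfied; your verification that the right-hand side lies in $(-1,1)$ and your explicit choice $p=(1+c)/2$, $q=(1-c)/2$ fill in what the paper leaves as ``obvious''.

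There is, however, a concrete sign slip at the step where you claim that collecting terms turns your identity into ``exactly'' \eqref{eq:strategy_for_u}. Starting from your (correct) equation $w_{1,1}u(1|1)+w_{1,2}u(2|1)=(1-w_{1,1})u(1|0)+(1-w_{1,2})u(2|0)$ and substituting $u(1|1)=1-p$, $u(2|0)=1-q$ gives $w_{1,1}+p\,(w_{1,2}-w_{1,1})=(1-w_{1,2})+q\,(w_{1,2}-w_{1,1})$, hence $p-q=(1-w_{1,1}-w_{1,2})/(w_{1,2}-w_{1,1})$, which is the \emph{negative} of the right-hand side of \eqref{eq:strategy_for_u}; equivalently, the symmetrizers are characterized by $u(1|0)-u(2|1)=(w_{1,1}+w_{1,2}-1)/(w_{1,2}-w_{1,1})$. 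A spot check with $w_{1,1}=0$, $w_{1,2}=\nicefrac{3}{4}$ confirms this: the channel $U$ with $u(2|1)=\nicefrac{1}{3}$, $u(1|0)=0$ satisfies Definition~\ref{def:sim_symm} directly, so the symmetrizing constraint is $u(2|1)-u(1|0)=+\nicefrac{1}{3}$, whereas \eqref{eq:strategy_for_u} prescribes $-\nicefrac{1}{3}$. Because $c\in(-1,1)$ if and only if $-c\in(-1,1)$, your existence argument and hence the symmetrizability claim are untouched, but the sentence asserting exact agreement with \eqref{eq:strategy_for_u} is false as written; note also that the discrepancy points at the printed equation \eqref{eq:strategy_for_u} itself, since the paper's proof never derives the explicit formula (and its later use in Theorem~\ref{thm:super_activ_sc_pw} only invokes the case $w_{1,2}=1-w_{1,1}$, where both signs give $0$). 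You should state the corrected equation, or at least flag the mismatch, rather than claim your identity reproduces \eqref{eq:strategy_for_u} verbatim.
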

\begin{remark}
	Theorem~\ref{thm:symm_for_K=1} states that for every $w_{1,2}\in(\nicefrac{1}{2},1]$ there exists a jammer's strategy symmetrizing $\mathcal{W}$ resulting in zero channel capacity. Subsuming, the set of symmetrizable AVBSCs is large for $K=1$. This implies that a connectivity system relying solely on a single receive antenna operating in an AVC scenario is not able to guarantee a fail-safe operation over a wide range of system parameters. Coming back to the applicative example of vehicular communication for safety reasons, such a system would fail in any certification process if specific requirements on the reliability were imposed since overall system breakdown cannot be precluded. Comparing the result for $K=1$ to the conditions for symmetrizability of independent AVCs in Theorem~\ref{thm:sim_symm_BSC}, it is obvious that by using two additional, independent diversity branches for communicating over AVCs, it is possible to restrict the set of symmetrizable channels to a small subset of trivial channels with little importance for practical applications. 
\end{remark}
In the following example, we explicitly underline the difference in symmetrizability between $K=1$, $K=2$ and $K=3$, showing the shrinkage of the cardinality of the set of symmetrizable AVCs for the three cases comparing the results of the Theorems~\ref{thm:sim_symm_BSC} to \ref{thm:symm_for_K=1}.
\begin{example}
	First, we take the result of Theorem~\ref{thm:symm_for_K=1}. Recall that for a single AVC ($K=1$) and a fixed strategy $q^{(1)}$, every AVBSC $\mathcal{W}$ is symmetrizable for which for the BSC coefficients $w_{1,1}$ and $w_{1,2}$ of the respective states $W_{1,1}$ and $W_{1,2}$ it holds that, without loss of generality, $w_{1,1}\in[0,\nicefrac{1}{2})$ and $w_{1,2}\in(\nicefrac{1}{2},1]$.\\
	Next, let $\mathcal{U}_{CS}:=\{(w_{1,1},w_{1,2}):w_{1,1}\in[0,\nicefrac{1}{2}), w_{1,2}\in(\nicefrac{1}{2},1]\}$, and define two-state binary symmetric AVCs $\mathcal W(w_{1,1},w_{1,2}):=\{BSC(w_{1,1}),BSC(w_{1,2})\}$ for $(w_{1,1},w_{1,2})\in \mathcal{U}_{CS}$. Define the sets $\mathcal{U}_K((w_{2,1},w_{2,2})\dots (w_{K,1},w_{K,2}))$, $K\in\nn$, as follows:
	\begin{align}
	\begin{split}
	\mathcal{U}_K((w_{2,1},w_{2,2})\dots (w_{K,1},w_{K,2})):=\left\{\begin{array}{ll@{}}	
	(w_{1,1},w_{1,2})\in \mathcal{U}_{CS}:\exists (w_{2,1},w_{2,2})\dots (w_{K,1},w_{K,2}),\\
	\mathrm{such\ that\ }W(w_{1,1},w_{1,2})\otimes_{i=2}^K\mathcal W(w_{i,1},w_{i,2})\mathrm{\ is}\\
	\mathrm{symmetrizable\ for\ an\ i.s.c.\ jammer.}\end{array}\right\}.
	\end{split}
	\end{align}
	We know from Theorem~\ref{thm:symm_for_K=1} that $\mathcal{U}_1()=\mathcal{U}_{CS}$ for $K=1$. For $K=2$, Theorem~\ref{thm:symm_for_K=2} shows that $\mathcal{U}_2((w_{2,1},1-w_{2,1}))=\{(w_{1,1},w_{1,2})\in \mathcal{U}_{CS}:w_{1,2}=1-w_{1,1}\}$. In these two cases, $\mathcal{U}$ is a plane ($K=1$) or line with slope $-1$ ($K=2$), respectively. For a visualization, see (a) and (b) in Figure~\ref{fig:increasing_K_example}. Thus, the set of possible strategies causing symmetrization shrinks in the geometrical representation from a plane to a line, increasing $K$ from one to two.\\
	Next, we turn to the case in which three independent AVC diversity branches are available for communication, that is, $K=3$. This allows the use of the result of Theorem~\ref{thm:sim_symm_BSC} which asserts that for $K=3$ with $w_{3,2}\neq 1-w_{3,1}$, we have
	\begin{align}\label{eq:M_for_K_3}
	\mathcal{U}_3((w_{2,1},1-w_{2,1}),(w_{3,1},w_{3,2}))=\emptyset.
	\end{align}
	Thus, for $K=3$, $\mathcal{U}_3$ is the empty set. For a visualization, see (c) in Figure~\ref{fig:increasing_K_example}.\\
	\definecolor{darkblue}{RGB}{0,0,150}
		\setlength\abovecaptionskip{-1pt}
		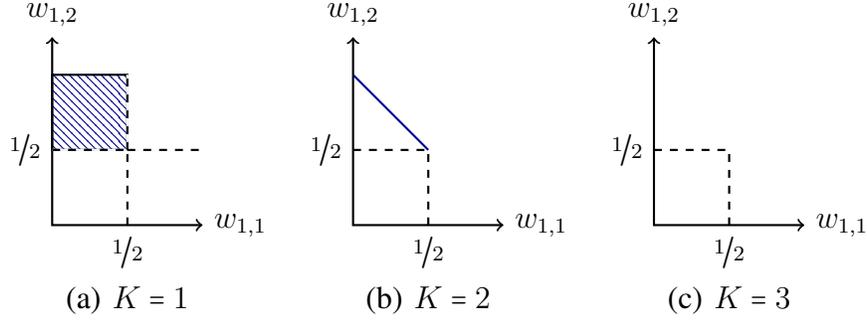
\begin{figure}\centering\begin{tikzpicture}[thick,scale=1, every node/.style={transform shape}]
		\draw [<->,thick] (0,2.5) node (yaxis) [above] {$w_{1,2}$}
		|- (2,0) node (xaxis) [right] {$w_{1,1}$};
		\draw [<->,thick] (4,2.5) node (yaxis2) [above] {$w_{1,2}$}
		|- (6,0) node (xaxis2) [right] {$w_{1,1}$};
		\draw [<->,thick] (8,2.5) node (yaxis3) [above] {$w_{1,2}$}
		|- (10,0) node (xaxis3) [right] {$w_{1,1}$};
		\coordinate (c) at (1,1);
		\coordinate (d) at (5,1);
		\coordinate (e) at (9,1);
		\draw[pattern=north west lines, pattern color=darkblue] (0,1) rectangle (1,2);
		\path[color=white] (1,1) edge (1,2);
		\path[color=white] (0,1) edge (1,1);
		\path[color=black, dashed] (1,1) edge (2,1);
		\path[color=black, dashed] (1,1) edge (1,2);
		\node (K1) at (1,-1)[align=center] {\textcolor{black}{(a)\ $K=1$}};
		\draw[dashed] (yaxis |- c) node[left] {$\nicefrac{1}{2}$}
		-| (xaxis -| c) node[below] {$\nicefrac{1}{2}$};
		\node (K2) at (5,-1)[align=center] {\textcolor{black}{(b)\ $K=2$}};
		\draw[dashed] (yaxis2 |- d) node[left] {$\nicefrac{1}{2}$}
		-| (xaxis2 -| d) node[below] {$\nicefrac{1}{2}$};
		\draw[darkblue] (4,2) --(d);
		\node (K3) at (9,-1)[align=center] {\textcolor{black}{(c)\ $K=3$}};
		\draw[dashed] (yaxis3 |- e) node[left] {$\nicefrac{1}{2}$}
		-| (xaxis3 -| e) node[below] {$\nicefrac{1}{2}$};
		\end{tikzpicture}	
		\caption{Symmetrizable i.s.c. composite independent AVBSCs with parameters $w_{1,1}$ and $w_{1,2}$. Depicted are the sets $\mathcal U_K$ for $K=1,2,3$.}
		\label{fig:increasing_K_example}
	\end{figure}
	The visualization of the shrinkage of the set $\mathcal{U}$ shows the importance of the concept of diversity for reliable communication. Communicating over an i.s.c. composite AVC consisting of three or more independent AVBSCs avoids symmetrizability, except in trivial cases. Enabling a fail-safe operation is one of the major concerns when standardizing and certificating concepts and techniques in the context of vehicular connectivity for safety reasons, or for highly-automated driving. Setting up a communication system according to the prerequisites of Theorem~\ref{thm:sim_symm_BSC}, it is possible to guarantee a positive capacity of a wireless communication link even in the presence of unknown and arbitrarily varying interference.
	\end{example}
	\begin{theorem}[AVBSC capacity]\label{thm:AVBSC_cappa}
		Let $K=3$, $\mathcal X=\mathcal Y=\{0,1\}$ and $\mathcal S=[2]$. Let the i.s.c. composite independent AVBSC $\mathfrak{W}_\mathrm{CI,id-c}$ with $K$ components be given such that the exception in Theorem~\ref{thm:sim_symm_BSC} is not fulfilled. Then the deterministic and the CR-assisted capacity, $C_\mathrm{d}$ and $C_\mathrm{r}$, of $\mathfrak{W}_\mathrm{CI,id-c}$ are given by
		\begin{align}\label{eq:id_cons_cappa}
		C_\mathrm{d}(\mathfrak W_\mathrm{CI,id-c})=C_\mathrm{r}(\mathfrak W_\mathrm{CI,id-c})=\min_{q\in\mathcal P(\mathcal{S})}\max_{p\in\mathcal{P}(\mathcal{X})} I\left(p;\sum_{s=1}^2q(s)\bigotimes_{i=1}^KW_{i,s}\right).
		\end{align}
	\end{theorem}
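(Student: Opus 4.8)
The plan is to reduce Theorem~\ref{thm:AVBSC_cappa} to three classical facts about arbitrarily varying channels. First, I would observe that the i.s.c.\ composite independent AVBSC is itself an ordinary AVC: under the identification $(s,s,s)\leftrightarrow s$ it is the channel $\widetilde{\mathcal W}:=(V_s)_{s\in\mathcal S}$ with $V_s:=\bigotimes_{i=1}^{3}W_{i,s}$, input alphabet $\mathcal X=\{0,1\}$, and output alphabet $\mathcal Y_1\times\mathcal Y_2\times\mathcal Y_3$. The hypothesis of the theorem is exactly that the exception of Theorem~\ref{thm:sim_symm_BSC} does not occur, so that theorem gives that $\widetilde{\mathcal W}$ is non-symmetrizable in the sense of Definition~\ref{def:sim_symm}.

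Next I would invoke the Csisz\'ar--Narayan positivity criterion together with the Ahlswede dichotomy. By \cite{2627}, non-symmetrizability of $\widetilde{\mathcal W}$ forces $C_\mathrm{d}(\widetilde{\mathcal W})>0$; by the Ahlswede dichotomy \cite{Ahlswede1978}, $C_\mathrm{d}(\widetilde{\mathcal W})$ is either $0$ or $C_\mathrm{r}(\widetilde{\mathcal W})$; hence $C_\mathrm{d}(\widetilde{\mathcal W})=C_\mathrm{r}(\widetilde{\mathcal W})$. (Equivalently, \cite{2627} directly identifies the value of the deterministic capacity of a non-symmetrizable AVC.) Ahlswede's random coding theorem for AVCs then supplies
\[
C_\mathrm{r}(\widetilde{\mathcal W})=\max_{p\in\mathcal P(\mathcal X)}\ \min_{q\in\mathcal P(\mathcal S)}\ I\!\left(p;\ \sum_{s=1}^{2}q(s)\bigotimes_{i=1}^{3}W_{i,s}\right),
\]
the inner minimum being over the jammer's state distribution. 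It is worth stressing here that no constrained-AVC machinery is needed: $\widetilde{\mathcal W}$ is an honest AVC with a two-element state set, and the identical-state constraint has been absorbed into the definition of $\widetilde{\mathcal W}$ rather than imposed as a power constraint.

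It then remains only to exchange the order of the two extrema. Writing $V_q:=\sum_{s}q(s)\bigotimes_{i}W_{i,s}$, the map $q\mapsto V_q$ is affine, $p\mapsto I(p;V_q)$ is concave for fixed $q$ (concavity of mutual information in the input distribution), and $q\mapsto I(p;V_q)$ is convex for fixed $p$ (convexity of mutual information in the channel, which can be seen from the joint convexity of $(v,u)\mapsto v\log(v/u)$ composed with the linear maps $V\mapsto v(y\mid x)$ and $V\mapsto (pV)(y)$, pre-composed with $q\mapsto V_q$). Since $\mathcal P(\mathcal X)$ and $\mathcal P(\mathcal S)$ are nonempty, convex, and compact and $I(p;V_q)$ is jointly continuous, Sion's minimax theorem yields $\max_p\min_q I(p;V_q)=\min_q\max_p I(p;V_q)$, which is precisely the right-hand side of \eqref{eq:id_cons_cappa}. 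Chaining $C_\mathrm{d}=C_\mathrm{r}=\max_p\min_q I=\min_q\max_p I$ then completes the proof.

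I do not expect a single deep obstacle; the work is in the bookkeeping that legitimizes these invocations. The points to verify carefully are: that $\widetilde{\mathcal W}$ meets the exact hypotheses of \cite{2627} and \cite{Ahlswede1978} (finite alphabets, average-error criterion, unconstrained states after the reduction); that the exclusion of the exception clause of Theorem~\ref{thm:sim_symm_BSC} is indeed what the present hypotheses amount to; and that the concavity/convexity feeding Sion's theorem is applied in the correct slots (concave in $p$, convex in $q$). Each of these is routine, which is why I would present the argument essentially in the form above.
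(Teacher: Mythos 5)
Your proposal is correct and follows essentially the same route as the paper: identify the i.s.c.\ composite channel with an ordinary AVC with state set $\mathcal S=[2]$, obtain non-symmetrizability from Theorem~\ref{thm:sim_symm_BSC}, and invoke Csisz\'ar--Narayan \cite{2627} (together with the Ahlswede dichotomy \cite{Ahlswede1978}) to conclude $C_\mathrm{d}=C_\mathrm{r}$ and identify the capacity with the mutual-information formula. The only material you add beyond the paper's one-line argument is the explicit concavity/convexity justification for interchanging $\max_p$ and $\min_q$, which the paper leaves implicit.
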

\begin{remark}
	Theorem~\ref{thm:AVBSC_cappa} together with the results from \cite{7447794} and \cite{doi:10.1063/1.4902930} using the distance measure defined in Definition~\ref{def:function_F} directly implies the continuity of $C_d$. Continuity implies that there exists an $\delta>0$ for all $\mathfrak{W'}\in\mathcal{C}([\mathcal{X}_1,\dots\mathcal{X}_K]\times\mathcal{S},\mathcal{Y}_1,\dots,\mathcal{Y}_K)$ that are 'close' in terms of the distance measure $d(\mathcal{W},\mathcal{\tilde{W}}):=\max_{x\in\mathcal{X}}\sum_{y\in\mathcal{Y}}|W(y|x)-,\mathcal{\tilde{W}}(y|x)|$. That is, $d(\mathfrak{W},\mathfrak{W'})<\delta$ such that $F(\mathfrak{W}')>0$ and thus $\mathfrak{W'}$ is not symmetrizable according to Definition~\ref{def:sim_symm}. This observation implies that continuity is guaranteed over a large parameter set.
\end{remark}
\end{subsection}
\begin{subsection}{Super-Activation of the Capacities for the i.s.c. Composite Independent AVBSC}\label{subsec:super_act_res}
	In \cite{7541865}, it is stated that '[...] super-activation is not possible for reliable message transmission over orthogonal AVCs' and thus, is a unique feature of orthogonal arbitrarily varying wiretap channels. For an unrestricted jammer being able to adapt its strategy on every channel separately, this statement is indisputable. However, the following theorem shows that for the i.s.c. composite independent AVBSC a completely unexpected phenomenon occurs.
\begin{theorem}[Super-activation of the deterministic and the CR-assisted capacity of the i.s.c. composite independent AVC]\label{thm:super_activ_sc}
	Let $K=3$, $\mathcal X=\mathcal Y=\{0,1\}$ and $\mathcal S=[2]$. Let the i.s.c. composite independent AVC $\mathfrak{W}_{CI,id-c}$, consisting of the three AVBSCs $\mathcal{W}_1=\{W_{1,1},W_{1,2}\}$, $\mathcal{W}_2=\{W_{2,1},W_{2,2}\}$ and $\mathcal{W}_3=\{W_{3,1},W_{3,2}\}$, be given. Let the BSC-parameters $w_{i,s}$ for the $i$-th channel satisfy $w_{1,1},w_{2,1},w_{3,1}\in[0,\nicefrac{1}{2})$ and $w_{1,2},w_{2,2},w_{3,2}\in(\nicefrac{1}{2},1]$. In addition, let $w_{1,1}\neq1-w_{1,2}\lor w_{2,1}\neq1-w_{2,2}\lor w_{3,1}\neq1-w_{3,2}$. Then it holds for $\mathcal{W}_1,\mathcal{W}_2,\mathcal{W}_3$:
	\begin{enumerate}
	 \item 	$C_d(\mathcal{W}_1)=C_d(\mathcal{W}_2)=C_d(\mathcal{W}_3)=0 \mathrm{\ but\ }C_d(\mathfrak{W}_{CI,id-c})>0$, that is, the deterministic capacity shows super-activation. 
	\item $C_r(\mathcal{W}_1)=C_r(\mathcal{W}_2)=C_r(\mathcal{W}_3)=0 \mathrm{\ but\ }C_r(\mathfrak{W}_{CI,id-c})>0$, that is, the CR-assisted capacity shows super-activation. 
	\end{enumerate}
	\end{theorem}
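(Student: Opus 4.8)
The plan is to reduce the super-activation statement to the capacity formula and the non-symmetrizability result already established. First I would establish the "zero single-channel capacity" half of each claim. For a single AVBSC $\mathcal{W}_i$ with $w_{i,1}\in[0,\nicefrac{1}{2})$ and $w_{i,2}\in(\nicefrac{1}{2},1]$, Theorem~\ref{thm:symm_for_K=1} directly gives symmetrizability: indeed the required value $(w_{i,1}+w_{i,2}-1)/(w_{i,2}-w_{i,1})$ lies in $(-1,1)$, so a symmetrizing $U\in\mathcal C(\mathcal X,\mathcal S)$ exists. By the Csiszár--Narayan characterization (cited via \cite{2627,720535}), a symmetrizable AVC has zero deterministic capacity, hence $C_d(\mathcal{W}_i)=0$ for each $i\in[3]$. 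For the CR-assisted capacity I would invoke the Ahlswede dichotomy together with the fact that for a single AVBSC with both states symmetrizable in this range, the CR-assisted capacity $\min_{q}\max_{p}I(p;\sum_s q(s)W_{i,s})$ also vanishes: choosing $q$ to put weight so that the averaged channel becomes $\mathrm{BSC}(\nicefrac{1}{2})$ (possible precisely because $w_{i,1}<\nicefrac{1}{2}<w_{i,2}$, so some convex combination equals $\nicefrac{1}{2}$) makes the mutual information zero for every input distribution. Thus $C_r(\mathcal{W}_i)=0$ for each $i$.

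Next I would handle the "positive composite capacity" half. Under the stated hypotheses --- $w_{i,1}\neq w_{i,2}$ (immediate from $w_{i,1}<\nicefrac{1}{2}<w_{i,2}$), $w_{i,j}\neq\nicefrac{1}{2}$ (also immediate), and the disjunction $w_{1,1}\neq 1-w_{1,2}\lor w_{2,1}\neq 1-w_{2,2}\lor w_{3,1}\neq 1-w_{3,2}$ --- the hypotheses of Theorem~\ref{thm:sim_symm_BSC} are exactly met, so $\mathfrak{W}_\mathrm{CI,id-c}$ is not symmetrizable. Then Theorem~\ref{thm:AVBSC_cappa} applies and gives
\begin{align}
C_d(\mathfrak W_\mathrm{CI,id-c})=C_r(\mathfrak W_\mathrm{CI,id-c})=\min_{q\in\mathcal P(\mathcal S)}\max_{p\in\mathcal P(\mathcal X)} I\Bigl(p;\sum_{s=1}^2 q(s)\bigotimes_{i=1}^3 W_{i,s}\Bigr).
\end{align}
It remains to show this quantity is strictly positive, i.e. that for \emph{every} $q\in\mathcal P(\mathcal S)$ there exists an input $p$ making the mutual information positive; equivalently, that no averaged composite channel $\sum_s q(s)\bigotimes_i W_{i,s}$ is useless (constant output). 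I would argue this by contradiction: if $\max_p I(p;\cdot)=0$ for some $q=(q,1-q)$, the channel $q\,W_{1,1}\otimes W_{2,1}\otimes W_{3,1}+(1-q)\,W_{1,2}\otimes W_{2,2}\otimes W_{3,2}$ would have identical columns, meaning the distribution of the output triple $(Y_1,Y_2,Y_3)$ is the same for input $0$ and input $1$. Writing out the $8$ output probabilities as functions of the $w_{i,j}$ and $q$, and matching them across the two inputs, I expect to derive that the only way to equalize all of them is precisely the excluded "flipped" configuration $w_{i,1}=1-w_{i,2}$ for all $i$ simultaneously (together with an appropriate $q$), contradicting the disjunction hypothesis.

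The main obstacle is this last positivity argument: one must show that the \emph{tensor} structure of the three independent branches prevents the averaged channel from ever becoming output-constant unless the trivial flipped case holds across all branches. A clean way is to note that for a BSC-type composite channel the output-equality condition factorizes coordinate-wise only partially, because $q$ couples the branches; so I would instead exploit that symmetrizability (via $F=0$ in Definition~\ref{def:function_F}) is the correct obstruction and that Theorem~\ref{thm:sim_symm_BSC} has already ruled it out --- but capacity positivity for CR-assisted codes is governed by the $\min_q\max_p$ expression, not by symmetrizability, so a separate small argument is genuinely needed. I anticipate reusing the algebraic computation in the proof of Theorem~\ref{thm:sim_symm_BSC}: the same determinant/independence conditions that force non-symmetrizability also force, for each fixed $q$, that the averaged channel's two columns differ in at least one coordinate, hence $I(p;\cdot)>0$ at $p=(\nicefrac12,\nicefrac12)$. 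Once positivity of the $\min_q\max_p$ expression is in hand, both items (1) and (2) follow immediately since the deterministic and CR-assisted capacities coincide by Theorem~\ref{thm:AVBSC_cappa}.
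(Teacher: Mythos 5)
Your overall architecture coincides with the paper's: single-channel deterministic capacities vanish by the symmetrizability result of Theorem~\ref{thm:symm_for_K=1} together with Csisz\'ar--Narayan; single-channel CR-assisted capacities vanish because $w_{i,1}<\nicefrac{1}{2}<w_{i,2}$ puts $BSC(\nicefrac{1}{2})$ in $\conv(\mathcal W_i)$, killing the $\min_q\max_p$ expression; and for the composite channel you invoke Theorem~\ref{thm:sim_symm_BSC} plus Theorem~\ref{thm:AVBSC_cappa} to reduce everything to positivity of $\min_{q}\max_{p}I\bigl(p;\sum_{s}q(s)\bigotimes_{i}W_{i,s}\bigr)$. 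All of that matches the paper.

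The gap is exactly the step you flag as the main obstacle, and as written it is not closed. You argue by contradiction that some $q$ would make the averaged composite channel output-constant, and then say you ``expect'' to derive the excluded flipped configuration by matching the eight output probabilities, or ``anticipate'' that the determinant conditions from the proof of Theorem~\ref{thm:sim_symm_BSC} force the two columns to differ for every fixed $q$; neither computation is carried out, and the second claim does not follow from Theorem~\ref{thm:sim_symm_BSC} as stated, since that theorem speaks about symmetrizers $U\in\mathcal C(\mathcal X,\mathcal S)$, not about mixtures of states. The paper closes this in one line: if for the minimizing $q$ the distribution $\sum_{s}q(s)\,w_1(y_1|x,s)w_2(y_2|x,s)w_3(y_3|x,s)$ does not depend on $x$, then the \emph{constant} channel $u(s|x):=q(s)$ satisfies the symmetrizability condition of Definition~\ref{def:sim_symm} verbatim (both sides reduce to the same output distribution), so $\mathfrak W_{CI,id-c}$ would be symmetrizable, contradicting Theorem~\ref{thm:sim_symm_BSC}. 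In other words, an input-independent mixing distribution is itself a (trivial) symmetrizer, which is precisely the bridge from ``no symmetrizer exists'' to ``no averaged channel is useless''; with that observation inserted, no eight-equation bookkeeping is needed and your proof becomes the paper's proof step for step.
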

	\begin{remark}
		The second part of Theorem~\ref{thm:super_activ_sc} is, to the author's knowledge, the first occurrence of super-activation of the CR-assisted capacity. Hitherto, it was assumed that super-activation only occurs for deterministic capacities. Furthermore, symmetrizability was a decisive criterion for the deterministic capacity to be zero. For the CR-assisted capacity, symmetrizability is not as significant. Thus, part 2) of Theorem~\ref{thm:super_activ_sc} demonstrates an unexpected and surprisingly strong positive effect of receive diversity on the communication over AVCs.
	\end{remark}
	Below, we concentrate on the special case, $w_{1,1}=1-w_{1,2}\land w_{2,1}=1-w_{2,2}\land w_{3,1}=1-w_{3,2}$, excluded in Theorem~\ref{thm:super_activ_sc} for which we know by Theorem~\ref{thm:sim_symm_BSC} that the i.s.c. composite independent AVBSC is symmetrizable. In the following theorem, we show that also in this case, the deterministic capacity can be super-activated imposing power constraints on the transmitter's as well as the jammer's strategy. 
	\begin{theorem}[Super-activation of the deterministic capacity of the i.s.c. composite independent AVC with state and input power constraint]\label{thm:super_activ_sc_pw}
	Let $\mathcal S=[2]$, $K=3$ and $\mathcal X=\{0,1\}$. Let the i.s.c. composite independent AVC $\mathfrak{W}_{CI,id-c}$, consisting of the three AVBSCs $\mathcal{W}_1=\{W_{1,1},W_{1,2}\}$, $\mathcal{W}_2=\{W_{2,1},W_{2,2}\}$ and $\mathcal{W}_3=\{W_{3,1},W_{3,2}\}$, be given. Let the BSC-coefficients $w_{i,s}$ for the $i$-th channel satisfy $w_{1,1}=1-w_{1,2}$, $w_{2,1}=1-w_{2,2}$ and $w_{3,1}=1-w_{3,2}$ and all BSC parameters are not equal to $\nicefrac{1}{2}$. Let the state constraint $\Lambda=\nicefrac{1}{2}-\epsilon$ with cost function $l(s_i)=1-s$ and input-power constraint $\Gamma=\nicefrac{1}{2}-\eps$ with cost function $g(x_i)=x$ be given. There exists $\epsilon>0$ such that the deterministic capacity $C_d$ of $\mathfrak W_{CI,id-c}$ shows super-activation according to Definition~\ref{def:super_act}, that is,
	\begin{align}
	C_d(\mathcal{W}_1,\Lambda,\Gamma)=C_d(\mathcal{W}_2,\Lambda,\Gamma)=C_d(\mathcal{W}_3,\Lambda,\Gamma)=0 \mathrm{\ but\ }C_d(\mathfrak{W}_{CI,id-c},\Lambda,\Gamma)>0.
	\end{align}
\end{theorem}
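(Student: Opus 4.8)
The plan is to reduce the whole statement to the capacity theorem for state- and input-constrained AVCs of Csisz\'ar and Narayan \cite{2627}: the deterministic average-error capacity of such a channel is zero exactly when the channel is symmetrizable by some $U\in\mathcal C(\mathcal X,\mathcal S)$ that the jammer can realise within its cost budget on every $\Gamma$-admissible input, or when the constrained common-randomness capacity already vanishes; in all other cases the deterministic and common-randomness capacities coincide. I would therefore split the proof into (a) showing each of $\mathcal W_1,\mathcal W_2,\mathcal W_3$ is symmetrizable by a $U$ whose cost never exceeds $\Lambda$ on $\Gamma$-admissible inputs, whence $C_d(\mathcal W_i,\Lambda,\Gamma)=0$; and (b) showing that $\mathfrak W_{CI,id-c}$ has strictly positive constrained common-randomness capacity and is not symmetrizable in any budget-compatible way, whence $C_d(\mathfrak W_{CI,id-c},\Lambda,\Gamma)>0$.

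For (a): since $w_{i,1}=1-w_{i,2}$, the right-hand side of \eqref{eq:strategy_for_u} equals $0$, so by Theorem~\ref{thm:symm_for_K=1} every $U$ with $u(2\mid1)=u(1\mid0)$ symmetrizes $\mathcal W_i$; I would take the deterministic choice $u(1\mid0)=u(2\mid1)=1$. Realising it on a confuser codeword $x^n$ produces the state sequence $s_i=1$ when $x_i=0$ and $s_i=2$ when $x_i=1$, whose cost $\sum_i l(s_i)=\#\{i:x_i=1\}$ is at most $\Gamma n=\Lambda n$ because $g(x)=x$ and $\Gamma=\Lambda=\nicefrac{1}{2}-\eps$. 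Thus the symmetrizing attack is a legal constrained jamming strategy, and \cite{2627} gives $C_d(\mathcal W_i,\Lambda,\Gamma)=0$ for $i=1,2,3$.

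For the common-randomness part of (b): put $A:=\bigotimes_i W_{i,1}$ and $B:=\bigotimes_i W_{i,2}$. Because each $W_{i,2}$ is the input-flipped copy of $W_{i,1}$ (as $w_{i,2}=1-w_{i,1}$), one has $B(\cdot\mid x)=A(\cdot\mid1-x)$, so the averaged channel $q(1)A+q(2)B$ has two equal rows only when $q$ is uniform, while $A(\cdot\mid0)\neq A(\cdot\mid1)$ since not all $w_{i,s}=\nicefrac{1}{2}$. The uniform distribution has state cost $\nicefrac{1}{2}>\Lambda$, hence is forbidden, so for every admissible $q$ the channel $q(1)A+q(2)B$ has distinct rows and $\max_{p\in\mathcal P_\Gamma(\mathcal X)} I(p;q(1)A+q(2)B)>0$ (take any non-degenerate $p$, available because $\Gamma>0$). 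Compactness of the admissible $q$-simplex plus continuity then makes the minimum strictly positive, and Sion's minimax theorem yields $\max_{p\in\mathcal P_\Gamma}\min_{q\in\mathcal P_\Lambda} I(p;q(1)A+q(2)B)>0$; by the constrained robustification method this equals the constrained common-randomness capacity of $\mathfrak W_{CI,id-c}$, of the same shape as the unconstrained formula in Theorem~\ref{thm:AVBSC_cappa}. This also re-establishes continuity of $C_d$ at this parameter point in the sense of the remark after Theorem~\ref{thm:AVBSC_cappa}.

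The hard part is the second half of (b): that $\mathfrak W_{CI,id-c}$ admits no symmetrizer compatible with the constraints, i.e.\ that for every $U$ symmetrizing the composite there is a $\Gamma$-admissible input on which the induced state cost exceeds $\Lambda n$. This is delicate precisely because $A(\cdot\mid0)\neq A(\cdot\mid1)$ pins down the composite's symmetrizers in the same algebraic manner as a single branch's, so the crude counting bound of step (a) no longer suffices; the argument must genuinely use the product structure of $A$ and $B$ together with the tight equality $\Gamma=\Lambda$. I would carry it out in the dependent-component-analysis framework of \cite{7572096}: describe the constraint-admissible symmetrizing channels of $\bigotimes_i W_{i,s}$, and show that the quantity $F$ of Definition~\ref{def:function_F}, restricted to budget-compatible $U$, stays bounded away from $0$ for the given parameters, possibly after shrinking $\eps$. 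Granting this, the Csisz\'ar--Narayan dichotomy promotes the positive common-randomness capacity obtained above to $C_d(\mathfrak W_{CI,id-c},\Lambda,\Gamma)>0$, and together with (a) this is the asserted super-activation.
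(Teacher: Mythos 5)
There is a genuine gap, and it sits exactly where you yourself locate "the hard part." Your step (a) is essentially the paper's: since $w_{i,1}=1-w_{i,2}$, the symmetrizers of each branch are the channels with $u(2\vert 1)=u(1\vert 0)$, and picking one whose cost on $\Gamma$-admissible inputs never exceeds $\Lambda$ (your deterministic $u(1\vert 0)=u(2\vert 1)=1$ gives cost $p(1)\le\Gamma=\Lambda$; the paper minimizes exactly and gets $\Lambda_0(p)=\min(p,1-p)$, so $\max_{g(p)\le\Gamma}\Lambda_0(p)=\nicefrac{1}{2}-\eps=\Lambda$) lets Theorem~3 of \cite{2627} (together with the boundary-case remark on p.~188, which you should cite since equality $\Lambda=\Lambda_0(\Gamma)$ can occur) deliver $C_d(\mathcal W_i,\Lambda,\Gamma)=0$. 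Your argument that the averaged composite channel has distinct rows for every admissible $q$ is also correct. But the claim that actually yields $C_d(\mathfrak W_{CI,id-c},\Lambda,\Gamma)>0$ is never established: you defer it ("I would carry it out\ldots", "Granting this\ldots") to a future analysis of a budget-restricted version of the function $F$ from Definition~\ref{def:function_F}. That is not only unproved, it is also not quite the right quantity for the constrained Csisz\'ar--Narayan machinery: what Theorem~3 of \cite{2627} requires is a comparison of $\Lambda$ with $\Lambda_0(\Gamma)=\max_{g(p)\le\Gamma}\min_{U\in\mathcal U}\sum_{x,s}p(x)u(s\vert x)l(s)$, i.e.\ the minimal jamming \emph{cost} over the set $\mathcal U$ of (unconstrained) symmetrizers, not a statement that some constrained variant of $F$ stays bounded away from zero; and no shrinking of $\eps$ is needed.

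The paper closes this gap with a short computation that is already implicit in the proof of Theorem~\ref{thm:sim_symm_BSC}: in the all-flip case $w_{i,2}=1-w_{i,1}$ (all parameters $\neq\nicefrac{1}{2}$), the dependent-component analysis of \cite{7572096} forces the symmetrizing distributions to be uniform, so the composite channel has a \emph{unique} symmetrizer, $u(s\vert x)\equiv\nicefrac{1}{2}$. Its cost is $\sum_s\tfrac12\,l(s)=\nicefrac{1}{2}$ for every input distribution $p$, hence $\Lambda_0(p)\equiv\nicefrac{1}{2}$ and $\max_{g(p)\le\Gamma}\Lambda_0(p)=\nicefrac{1}{2}>\nicefrac{1}{2}-\eps=\Lambda$; part 2) of Theorem~3 in \cite{2627} then gives $C_d(\mathfrak W_{CI,id-c},\Lambda,\Gamma)>0$ directly (your separate positivity argument for the constrained CR formula is a nice consistency check but is not needed once this route is taken). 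So your proposal identifies the right ingredients and the right reduction, but without the characterization of the composite's symmetrizer set and the comparison of its cost with $\Lambda$, the super-activation claim is not yet proved.
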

\begin{theorem}[Super-activation of the capacity of the i.s.c. composite orthogonal AVBSC]\label{thm:super_activ_sc_pw_orth}
	The statements of Theorem~\ref{thm:super_activ_sc} and Theorem~\ref{thm:super_activ_sc_pw} remain valid when replacing the i.s.c. composite independent AVBSC with the i.s.c. composite orthogonal AVBSC.
\end{theorem}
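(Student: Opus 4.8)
The plan is to exploit the observation, already recorded after Definition~\ref{def:Composite_independent_AVC}, that the i.s.c.\ composite independent AVBSC is precisely the i.s.c.\ composite orthogonal AVBSC with its encoder restricted to diagonal input words. Write $\mathfrak W_{CO,id-c}$ for the i.s.c.\ composite orthogonal AVBSC built from $\mathcal W_1,\mathcal W_2,\mathcal W_3$, and recall the embedding $E:\mathcal X\to\mathcal X_1\times\ldots\times\mathcal X_K$ with $e(x_1,\ldots,x_K|x)=\prod_{i=1}^K\delta(x_i,x)$ and the relation $\mathfrak W_{CI,id-c}=\mathfrak W_{CO,id-c}\circ(Id\otimes E)$. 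First I would record the elementary identity of output laws: for every $n$, every $x^n\in\mathcal X^n$ and every $s^n\in\mathcal S^n$ one has $w_{CI}^{\otimes n}(y^n\,|\,x^n,s^n)=\mathfrak W_{CO,id-c}^{\otimes n}(y^n\,|\,E^{\otimes n}(x^n),s^n)$ for all $y^n\in(\mathcal Y_1\times\ldots\times\mathcal Y_K)^n$, which is immediate from the product forms in Definitions~\ref{def:Composite_orthogonal_AVC} and \ref{def:Composite_independent_AVC}. Note also that the output alphabet $\mathcal Y_1\times\ldots\times\mathcal Y_K$ is common to both models, so decoding sets transfer untouched.

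Next I would turn this into a comparison of capacities. Given any unassisted code $(f,\{\mathcal D_m\})$ for $\mathfrak W_{CI,id-c}$, the pair $(E^{\otimes n}\circ f,\{\mathcal D_m\})$ is an unassisted code for $\mathfrak W_{CO,id-c}$ with the same message set and, by the identity above, the same $\overline e_{UA}$ uniformly over state words; pre-composing each stochastic encoder with $E^{\otimes n}$ yields the same conclusion for CR-assisted codes. For the power-constrained variants one fixes on $\mathcal X_1\times\ldots\times\mathcal X_K$ a cost function that restricts to $g$ along the diagonal, e.g.\ $(x_1,\ldots,x_K)\mapsto\max_i x_i$, so that $g(E(x))=g(x)$ and admissibility of input words is preserved; the state constraint $\Lambda$ acts on the same identical-state set in both models and needs no adjustment. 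Hence $C_d(\mathfrak W_{CO,id-c})\ge C_d(\mathfrak W_{CI,id-c})$, $C_r(\mathfrak W_{CO,id-c})\ge C_r(\mathfrak W_{CI,id-c})$ and $C_d(\mathfrak W_{CO,id-c},\Lambda,\Gamma)\ge C_d(\mathfrak W_{CI,id-c},\Lambda,\Gamma)$. Since the components $\mathcal W_1,\mathcal W_2,\mathcal W_3$ appearing here are ordinary two-state AVBSCs, identical to those in Theorem~\ref{thm:super_activ_sc} and Theorem~\ref{thm:super_activ_sc_pw}, the ``zero'' halves are unchanged: $C_d(\mathcal W_i)=C_r(\mathcal W_i)=0$ and $C_d(\mathcal W_i,\Lambda,\Gamma)=0$ hold verbatim. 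Combining these with the positivity of $C_d(\mathfrak W_{CI,id-c})$, $C_r(\mathfrak W_{CI,id-c})$ and $C_d(\mathfrak W_{CI,id-c},\Lambda,\Gamma)$ established in Theorem~\ref{thm:super_activ_sc} and Theorem~\ref{thm:super_activ_sc_pw} yields the claimed super-activation for $\mathfrak W_{CO,id-c}$.

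I do not expect a genuine obstacle here: the content of the theorem is that enlarging the encoder's input alphabet cannot reduce the capacity, and the proof is a direct reduction rather than a fresh symmetrizability computation. The only point that needs care is the power-constrained case, where one must commit to a cost function on the product input alphabet and verify that the diagonal embedding respects it; once that convention is fixed the reduction goes through unchanged. It is worth stressing that this route deliberately sidesteps the question of whether $\mathfrak W_{CO,id-c}$ is itself symmetrizable, since the code-restriction inequality uses only achievability in the independent model and hence requires no re-examination of Definition~\ref{def:sim_symm} over the larger input alphabet.
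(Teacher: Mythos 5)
Your argument is correct, but it follows a genuinely different route from the paper's. You prove the statement operationally: since $\mathfrak W_{CI,id-c}=\mathfrak W_{CO,id-c}\circ(Id\otimes E)$, every (deterministic or CR-assisted, power-constrained or not) code for the independent model becomes, after pre-composing the encoder with the diagonal embedding $E^{\otimes n}$, a code for the orthogonal model with identical output statistics and identical worst-case average error over the same identical-state sequences, so all three capacities can only grow when the encoder's alphabet is enlarged; the positivity halves then follow from Theorem~\ref{thm:super_activ_sc} and Theorem~\ref{thm:super_activ_sc_pw}, while the zero halves concern the single AVBSCs and are untouched. The paper instead transfers the structural property: it argues via the functional $F$ of Definition~\ref{def:function_F} (and the results of \cite{7447794}) that non-symmetrizability of the diagonal channel established in Theorem~\ref{thm:sim_symm_BSC} persists for the orthogonal channel, because the maximization in \eqref{eqn:definition-of-F} runs over a strictly larger set of input pairs (equivalently, any symmetrizer of the orthogonal channel restricts to a symmetrizer of the diagonal one), and then invokes the positivity machinery of \cite{2627} anew for the larger alphabet. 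Your reduction is more elementary and treats the deterministic, CR-assisted and power-constrained cases uniformly without re-examining symmetrizability; the paper's route additionally yields the structural fact that $\mathfrak W_{CO,id-c}$ is itself non-symmetrizable (with $F>0$), which ties into its continuity remarks. The one point you rightly flag is the input cost function on $\mathcal X_1\times\ldots\times\mathcal X_K$ for the Theorem~\ref{thm:super_activ_sc_pw} analogue: the paper does not specify it either, so fixing a convention that agrees with $g$ on the diagonal (as you do) is a legitimate and necessary step for your reduction, and with it your proof is complete.
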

\begin{remark}
	Theorem~\ref{thm:super_activ_sc}, Theorem~\ref{thm:super_activ_sc_pw} and Theorem~\ref{thm:super_activ_sc_pw_orth} demonstrate that super-activation is not only a phenomenon occurring under secrecy constraints. The results require state constraints which do not restrict the jammer in an unrealistic way concerning practical applications. Furthermore, limiting the jammer such that $\mathcal S=[2]$ has the practical implication of a two-level pulse amplitude modulation. Theorem~\ref{thm:super_activ_sc}, Theorem~\ref{thm:super_activ_sc_pw} and Theorem~\ref{thm:super_activ_sc_pw_orth} in combination with Theorem~\ref{thm:sim_symm_BSC} shed new light on the principle of symmetrizability and the methods and ways to avoid symmetrizability in real world applications by exploiting the concept of spatial and/or frequency diversity.
\end{remark}
	The identical state constraint imposed on the jammer's strategy allows for communicating at positive rates over composite AVCs. In contrast, it is obvious that by dispersing the constraint, individual symmetrizability of the orthogonal channels $\mathcal{W}_1,\dots,\mathcal{W_K}\in\mathcal{C}(\mathcal{X}\times\mathcal{S},\mathcal{Y})$ directly implies symmetrizability of the composite orthogonal AVC $\mathfrak{W}\in\mathcal{C}([\mathcal{X}_1\times\mathcal{X}_K]\times[\mathcal{S}_1\times\mathcal{S}_K],\mathcal{Y}_1\times\mathcal{Y}_K)$. In this case, the symmetrizing strategy is simply the product strategy. Thus, allowing the jammer to individually adjust its strategy on the respective sub-channels or frequency diversity branches, no reliable communication is possible in an uncoordinated setting, since symmetrizability may prohibit communicating over the composite AVC at positive rates. Subsuming, diversity in combination with a superiority of the legitimate communicating parties in terms of a higher number of antennas or more degrees of freedom concerning precoding compared to the jammer is an enabler for reliable communication over AVCs averting system breakdown.
\end{subsection}
\end{section}
\begin{section}{Proofs}\label{sec:proofs}
	\begin{subsection}{Proofs of the Results presented in Subsection~\ref{subsec:symm_and_cappa_res}}
	\begin{proof}[Proof of Theorem~\ref{thm:sim_symm_BSC}] Observe that for BSCs the following relation holds: For every $a,b\in\mathbb{R}$, we have $BSC(a)\circ BSC(b)=BSC(b)\circ BSC(a)$. To use the results presented in \cite{7572096}, we perform the following relabeling:
		\begin{align*}
		V_1(\delta_0)&=W_{1,1}(\delta_0)=\mathbf{w}_{1,1}, &V_1(\delta_1)&=W_{1,2}(\delta_0)=\mathbf{w}_{1,2},\\
		V_2(\delta_0)&=W_{2,1}(\delta_0)=\mathbf{w}_{2,1}, &V_2(\delta_1)&=W_{2,2}(\delta_0)=\mathbf{w}_{2,2},\\
		V_3(\delta_0)&=W_{3,1}(\delta_0)=\mathbf{w}_{3,1}, &V_3(\delta_1)&=W_{3,2}(\delta_0)=\mathbf{w}_{3,2},\\
		Y_1(\delta_0)&=W_{1,1}(\delta_1)=\mathbb{F}(\mathbf{w}_{1,1}), &Y_1(\delta_1)&=W_{1,2}(\delta_1)=\mathbb{F}(\mathbf{w}_{1,2}),\\
		Y_2(\delta_0)&=W_{2,1}(\delta_1)=\mathbb{F}(\mathbf{w}_{2,1}), &Y_2(\delta_1)&=W_{2,2}(\delta_1)=\mathbb{F}(\mathbf{w}_{2,2}),\\
		Y_3(\delta_0)&=W_{3,1}(\delta_1)=\mathbb{F}(\mathbf{w}_{3,1}), &Y_3(\delta_1)&=W_{3,2}(\delta_1)=\mathbb{F}(\mathbf{w}_{3,2}),
		\end{align*}
			where for a real number $x$, we define $\mathbf x:=(x,1-x)$. Observe that $Y_i(\delta_0)=\mathbb{F}\circ V_i(\delta_0)$ and $Y_i(\delta_1)=\mathbb{F}\circ V_i(\delta_1)$ for all $i\in[3]$. Now, the modified symmetrizability condition from Definition~\ref{def:sim_symm},
		\begin{align}\label{eq:symmetrizability_proof_1}
		\sum\limits_{s=1}^2 q(s)\bigotimes_{i=1}^3\left(W_{i,s}(\delta_0)\right)=\sum\limits_{s=1}^2 q'(s)\bigotimes_{i=1}^3\left(W_{i,s}(\delta_1)\right),
		\end{align}
		can equivalently be written in the following form:
		\begin{align}\label{eq:insert__reform_in_proof}
		\sum\limits_{s=1}^2q(s)\bigotimes_{i=1}^3\mathbf{w}_{i,s}=\sum\limits_{s=1}^2q'(s)(\mathbb{F}^{\otimes3})(\bigotimes_{i=1}^3\mathbf{w}_{i,s}),
		\end{align}
		where the ${w}_{j,i}$'s are vectors. Solving \eqref{eq:insert__reform_in_proof} for the jamming strategy $q^{(3)}:=\sum_{s=1}^2q(s)\delta_s^{\otimes 3}$, we obtain
		\begin{align}\label{eq:bring_it_to_DCA}
		q^{(3)}&=\left(\bigotimes_{i=1}^3(V_i^{-1}\circ\mathbb{F}\circ V_i)\right)q'^{(3)}=\left(\bigotimes\limits_{i=1}^3X_i\right)q'^{(3)},
		\end{align}
		where $X_i=V_i^{-1}\circ\mathbb{F}\circ V_i$ for all $i\in[3]$. The inverse of $V_i$ exists for all $i\in[3]$ because, by assumption, the AVC channel states are not equal. For binary alphabets, this translates to existence of the inverses. The special cases where $q(s)=q'(s')=1$ for some selection $s,s'\in\mathcal S$ lead to a very simple proof of the theorem: In order to see this, we explicitly compute the matrices $X_i$:
		\begin{align}\label{eqn:matrix_eq_1}
		X_i&=V_i^{-1}\circ\mathbb{F}\circ V_i\\
		&=\det(V_i)
		\begin{pmatrix}
		1-w_{i,2} &-w_{i,2}\\
		w_{i,1}-1 & w_{i,1}
		\end{pmatrix}
		\begin{pmatrix}
		0 & 1\\
		1 & 0
		\end{pmatrix}
		\begin{pmatrix}
		w_{i,1} & w_{i,2}\\
		1-w_{i,1} & 1-w_{i,2}
		\end{pmatrix}\\
		&=\frac{1}{w_{i,1}-w_{i,2}}
		\begin{pmatrix}
		1-w_{i,1}-w_{i,2} &1-2w_{i,2}\\
		-1+2w_{i,1} &-1+w_{i,1}+w_{i,2}
		\end{pmatrix}.\label{eqn:matrix_eq_2}
		\end{align}
		For the special cases mentioned above, equation \eqref{eqn:matrix_eq_1} has solutions $V_i$ with $i\in[3]$ which take either the form $V_i=BSC(\nicefrac{1}{2})$ or $V_i=BSC(w_{i,1})$. Since the first of this solutions is excluded by assumption, only the second case remains relevant and implies that in the above mentioned special cases, the theorem is proven. Next, we treat the more general case in which $q(1),q'(1)\in(0,1)$. Here, we can make us of \cite[Theorem~1]{7572096}: For every $\tau\in S_2$, it holds $X_1=X_2=X_3=\tau^{-1}$, $q'=\tau(q)$. It follows
		\begin{align}\label{eq:inverse_permutation}
		\frac{1}{w_{1,1}-w_{1,2}}
		\begin{pmatrix}
		1-w_{1,1}-w_{1,2} &1-2w_{1,2}\\
		-1+2w_{1,1} &-1+w_{1,1}+w_{1,2}
		\end{pmatrix}=\tau^{-1}.
		\end{align}
		The only permutation matrix for which \eqref{eq:inverse_permutation} holds is $\tau^{-1}=\mathbb{F}$. Thus all $V_i$'s are BSCs as a consequence of
		\begin{align}\label{eq:flip_relation}
		V_{i}=\mathbb F\circ V_{i}\circ \mathbb{F}\mathrm{\ for\ } i\in[3].
		\end{align}
		Combining \eqref{eq:flip_relation} and \eqref{eq:symmetrizability_proof_1} and using the fact that $q'(2)=q(1)$ (from \cite{7572096}), we obtain $W_{1,s}=\mathbb{F}\circ W_{1,s}$, $W_{2,s}=\mathbb{F}\circ W_{2,s}$ and $W_{3,s}=\mathbb{F}\circ W_{3,s}$, or alternatively, $q(1)=q(2)=\nicefrac{1}{2}$ and $w_{1,2}=\mathbb{F}(w_{1,1})$, $w_{2,2}=\mathbb{F}(w_{2,1})$ and $w_{3,2}=\mathbb{F}(w_{3,1})$. By assumption, this case is excluded. Thus, the theorem is proven.
	\end{proof}
	\begin{proof}[Proof of Theorem~\ref{thm:symm_for_K=2}]
		Recall that $w_{2,2}=1-w_{2,1}$. This implies that $V_2=BSC(w_{2,1})$. We again exploit the fact that for BSCs for every two BSC-coefficients, $a,b\in\mathbb R$, it holds that $BSC(a)\circ BSC(b)=BSC(b)\circ BSC(a)$. Applying this relation to \eqref{eqn:matrix_eq_1} for $i=2$ results in $V_2^{-1}\circ\mathbb{F}\circ V_2=V_2^{-1}\circ V_2\circ\mathbb{F}=\mathbb{F}$ and thus \eqref{eq:bring_it_to_DCA} reads as
		\begin{align}
			q^{(2)}=\left(X_1\otimes \mathbb{F}\right)q'^{(2)},\label{eq:bring_it_to_DCA_rm}
		\end{align}
		Next, we use the the partial trace $\tr_{[L']}:\mathbb R^L\otimes\mathbb R^{L'}\to\mathbb R^L$ summing over the 'content' of $\mathbb R^{L'}$ in the following way: For $v=\sum_{i,j=1}^{L,L'}v_{i,j}\delta_i\otimes \delta_j$, the trace operator is defined by
		\begin{align}
		\tr_{[L']}(v):=\sum_{i,j=1}^{L,L'}v_{i,j}\delta_j.
		\end{align}
		Tracing over the first system directly gives $q'=\mathbb{F}(q)$. Now, we go back to \eqref{eq:bring_it_to_DCA_rm} and make use of the following relation:
		\begin{align}\label{eq:X_for_K_2}
		q^{(2)}&=(X_1\otimes \mathbb{F})q'^{(2)}\\
		&=(X_1\otimes Id)(Id\otimes \mathbb{F})q'^{(2)}\\				
		&=(X_1\otimes Id)\sum_{j,i}q'(j)(1-\delta(i,j))\delta_i\otimes\delta_j\\
		&=(X_1\otimes Id)(q(1)\delta_2\otimes\delta_1+q(2)\delta_1\otimes\delta_2)\\
		&=\sum_{i,j=1}^{2}q(j)x_{(i,1\oplus j)}\delta_i\otimes\delta_j,
		\label{eq:eq:X_for_K_2_end}
		\end{align}
		where $\oplus:\mathbb{Z}\times(\mathbb{Z}\setminus\{0\})\to\mathbb{Z}, (a,b)\mapsto a\oplus b:=a+(b \mod 2)$. In the special cases where, for example $q(1)=q'(2)=1$ or $q(2)=q'(1)=1$, the solution set for equation \eqref{eq:X_for_K_2} is given by
		\begin{align}
		\left\{\begin{pmatrix}
		x & 1\\
		1-x & 0
		\end{pmatrix},
		\begin{pmatrix}
		0 & x\\
		1 & 1-x
		\end{pmatrix}\right\},
		\end{align}
		 which, in combination with \eqref{eqn:matrix_eq_1} and \eqref{eqn:matrix_eq_2}, lets us conclude that $w_{1,1}=1-w_{1,2}$ has to hold. In all other cases, component-wise comparison yields $X_1=X_2=\mathbb F$. Via equation \eqref{eq:inverse_permutation} this implies that $w_{1,1}=1-w_{1,2}$ holds in all these cases as well.
		\end{proof}
	\begin{remark}
		The general case for $K=2$ remains an open problem, since the $X_i$'s are no BSCs in general. If all $X_i$ were BSCs by assumption, the problem could be solved by using Theorem~3 in \cite{noetzel2016}.
	\end{remark}
	\begin{proof}[Proof of Theorem~\ref{thm:symm_for_K=1}]
	Let the transition probability matrices of the two AVBSCs with states $W_{1,1}$ and $W_{1,2}$ be defined as follows:
	\begin{align}
	W_{1,1}&=\begin{pmatrix}
	w_{1,1} &1-w_{1,1}\\
	1-w_{1,1} &w_{1,1}
	\end{pmatrix},
	&W_{1,2}&=\begin{pmatrix}
	w_{1,2} &1-w_{1,2}\\
	1-w_{1,2} &w_{1,2}
	\end{pmatrix}.
	\end{align}
	Recall that we excluded the trivial case in which $w_{1,1}=w_{1,2}=\nicefrac{1}{2}$. Now, we check the criterion for symmetrizability:
	\begin{align}
	&u(1\vert 0)(w_{1,1}\cdot\delta_0+(1-w_{1,1})\delta_1)+(1-u(1\vert 0))(w_{1,2}\cdot\delta_0+(1-w_{1,2})\delta_1)\\
	&=u(1\vert 1)((1-w_{1,1})\delta_0+w_{1,1}\cdot\delta_1)+(1-u(1\vert 1))((1-w_{1,2})\delta_0+w_{1,2}\cdot\delta_1).	
	\end{align}
	This is equivalent to the following two equations
	\begin{align}
	u(1\vert 0)\cdot w_{1,1}+(1-u(1\vert 0))w_{1,2}&=u(1\vert 1)(1-w_{1,1})+(1-u(1\vert 1))(1-w_{1,2}),\label{eqn_ex_avc_norm_1}\\
	u(1\vert 0)(1-w_{1,1})+(1-u(1\vert 0))(1-w_{1,2})&=u(1\vert 1)\cdot w_{1,1}+(1-u(1\vert 1))w_{1,2}.\label{eqn_ex_avc_norm_2}
	\end{align}
	Since $w_{1,1}\in[0,\nicefrac{1}{2})$ is fixed, it is obvious that the above equations, \eqref{eqn_ex_avc_norm_1} and \eqref{eqn_ex_avc_norm_2}, can be fulfilled for arbitrary $w_{1,2}\in(\nicefrac{1}{2},1]$ by choosing $u(1\vert 0)$ and $u(1\vert 1)$ properly.
	\end{proof}
	\begin{proof}[Proof of Theorem~\ref{thm:AVBSC_cappa}]
	Theorem~\ref{thm:AVBSC_cappa} follows from Theorem~1 in \cite{2627} and the observation that Theorem~\ref{thm:sim_symm_BSC} ensures non-symmetrizability. Thus, the deterministic capacity equals its common randomness-assisted counterpart.
	\end{proof}
\end{subsection}
\begin{subsection}{Proofs of the Results presented in Subsection~\ref{subsec:super_act_res}}
	\begin{proof}[Proof of Theorem~\ref{thm:super_activ_sc}]
		The first part of 1) in Theorem~\ref{thm:super_activ_sc} is given by Theorem~\ref{thm:symm_for_K=1}. For the second part of 1), Theorem~1 in \cite{Ahlswede1978} in combination with \cite{2627} and \cite{1056995} implies that the deterministic code capacity of a non-symmetrizable AVC under average probability of error either equals its CR-assisted capacity or else is zero. Furthermore, we know by Theorem~\ref{thm:sim_symm_BSC} that the i.s.c. composite AVBSC $\mathfrak W_{CI,id-c}$ is non-symmetrizable. Thus, it holds that $C_d(\mathfrak W_{CI,id-c})=C_r(\mathfrak W_{CI,id-c})$. By the same argumentation as in \cite{2627}, we show that non-symmetrizability of $\mathfrak W_{CI,id-c}$ is a sufficient condition for $C_r(\mathfrak W_{CI,id-c})>0$: Assume for contradiction that it holds 
		\begin{align}\label{eqn:mutual_inf_zero}
			\min_{q\in\mathcal P(\mathcal{S})}\max_{p\in\mathcal{P}(\mathcal{X})} I\left(p;\sum_{s=1}^2q(s)\bigotimes_{i=1}^KW_{i,s}\right)=0
		\end{align}
		for the non-symmetrizable i.s.c. composite independent AVBSC $\mathfrak W_{CI,id-c}$. This implies that the outputs of the AVBSCs $\mathcal{W}_1,\mathcal{W}_2$ and $\mathcal{W}_3$ of $\mathfrak W_{CI,id-c}$ in \eqref{eqn:mutual_inf_zero} do not depend on $p(x)$ for $p(x)> 0$ for all $x\in\mathcal{X}$, that is,
		\begin{align}
			P_{Y_1,Y_2,Y_3}(y_1,y_2,y_3)=\sum\limits_{s\in\mathcal{S}}W_1(y_1\vert x,s)W_2(y_2\vert x,s)W_3(y_3\vert x,s)q(s)
		\end{align} 
		does not depend on $x$. This results in constant independent channels. This in turn implies trivial symmetrizability by setting $u(\cdot\vert x)=q(\cdot)$ which contradicts the assumption. Thus, non-symmetrizability of $\mathfrak W_{CI,id-c}$ implies $C_r(\mathfrak W_{CI,id-c})>0$. This proves 1) of Theorem~\ref{thm:super_activ_sc}.\\
		For the proof of 2) in Theorem~\ref{thm:super_activ_sc}, the remaining part is to show that $C_r(\mathcal{W}_1)=C_r(\mathcal{W}_2)=C_r(\mathcal{W}_3)=0$. For this reason, we define the convex hull of a set of AVBSCs as
		\begin{align}
			\conv(\mathcal{W}):=\left\{W=\sum_{s=1}^{2}q(s)W_s:q\in\mathcal{P}(\mathcal{S})\right\}.
		\end{align}
		We know by \cite{blackwell1960,AHLSWEDE1969457,stiglitz1966coding} that the CR-assisted capacity of a single AVC is given by
		\begin{align}
			C_r(\mathcal{W})=\min_{q\in\mathcal{P}(\mathcal{S})}\max_{p\in\mathcal{P}(\mathcal{X})}I\left(p,\sum_{s=1}^{2}q(s)W_s\right).\label{eq:single_avc_cappa}
		\end{align}
		Observe that for each of the individual AVBSCs $\mathcal{W}_i$ with $i\in[3]$ it holds that $BSC(\nicefrac{1}{2})\in\conv(\mathcal{W}_i)$ for which $I(p,BSC(\nicefrac{1}{2}))=0$. Thus, by \eqref{eq:single_avc_cappa} and the fact that minimization and maximization can be interchanged in \eqref{eq:single_avc_cappa}, which is a consequence of the convexity of the mutual information with respect to the channel, it holds that $C_r(\mathcal{W}_i)=0$ for all $i\in[3]$.
	\end{proof}
	\begin{proof}[Proof of Theorem~\ref{thm:super_activ_sc_pw}]
	The proof of Theorem~\ref{thm:super_activ_sc_pw} makes use of the results published in \cite{2627}. First, we show that for the deterministic capacity of the individual channels it holds $C_\mathrm{d}(\mathcal{W}_i,\Lambda,\Gamma)=0$ for all $i\in[3]$: For $K=1$, without loss of generality, $\mathcal{W}_1$, $\mathcal{W}_2$ and $\mathcal{W}_3$ are symmetrizable for $w_{1,1},w_{2,1},w_{3,1}\in[0,\nicefrac{1}{2})$ and $w_{1,2},w_{2,2},w_{3,2}\in(\nicefrac{1}{2},1]$ when considered separately. In order to derive the implications of the power constraint on the symmetrizing strategy, we concentrate on $\mathcal{W}_1$ in the following. The symmetrizing strategy for $K=1$ reads as follows:
	\begin{align}
	u(2\vert 1)-u(1\vert 0)=\frac{w_{1,1}+w_{1,2}-1}{w_{1,2}-w_{1,1}}\label{eqn:symmetrizing_strategy}.
	\end{align}
	Let $\mathcal{U}$ denote the set of channels $\mathcal{C}(\mathcal{S},\mathcal{X})$ fulfilling symmetrizability according to Definition~\ref{def:sim_symm}. Observe that for $w_{1,2}=1-w_{1,1}$, \eqref{eqn:symmetrizing_strategy} delivers that the AVBSCs $\mathcal{W}_1$ is symmetrizable for every symmetrizer $U\in\mathcal C(\mathcal X,\mathcal S)$ fulfilling $u(2\vert 1)-u(1\vert 0)=0$. Thus, $U$ is itself a BSC and, moreover, $Id$ and $\mathbb F$ both fall into the class of symmetrizers. Let $\mathcal U$ be the set of channels symmetrizing $\mathcal W_1$. Now, we refer to \cite{2627} and compute 
	\begin{align}
	\Lambda_0(p)&=\min\limits_{U\in\mathcal{U}}\sum_{x\in\mathcal{X}}^{}\sum_{s\in\mathcal{S}}^{}p(x)u(s\vert x)l(s)\\
	&=\min\limits_{U\in\mathcal{U}}(p\cdot u(2\vert 1)+(1-p)\cdot u(2\vert 2))\\
	&=\min\limits_{u\in[0,1]}(p(1-2u)+u)\label{eq:last_step_p_u}\\
	&=\begin{cases}
	1-p & \text{if $p\geq \frac{1}{2}$} \\
	p & \text{otherwise}
	\end{cases},
	\end{align}
	where \eqref{eq:last_step_p_u} follows from the fact that $U$ is a BSC with BSC-parameter $u$. The last step is due to the maximum for $u$ being attained at the boundaries of the interval $[0,1]$.	Observe that $\Lambda_0(p,1-p)\in[0,\nicefrac{1}{2}]$ where $\Lambda_0(p,1-p)=\nicefrac{1}{2}$ is attained for $p=\nicefrac{1}{2}$. Now, we make use of Theorem~3 in \cite{2627}. Likewise, we define $g(p)=\sum_{x\in\mathcal{X}}^{}p(x)g(x)$. Observe that for our setup for the individual AVBSCs it holds
	\begin{align}
	\max_{g(p)\leq\Gamma}\Lambda_0(p,1-p)=\nicefrac{1}{2}-\eps.
	\end{align}
	Then, Theorem~3 in combination with \cite[Remark on page 188]{2627} gives $C_\mathrm{d}(\mathcal{W}_1,\Lambda,\Gamma)=0$. Especially, one may choose $\eps\in(0,\nicefrac{1}{2})$ such that $U_1,U_2,U_3$ obey the power constraint $\Lambda=\nicefrac{1}{2}-\epsilon$ for $\mathcal W_1,\mathcal W_2,\mathcal W_3$ leading to $C_\mathrm{d}(\mathcal{W}_i,\Lambda,\Gamma)=0$ for all $i\in[3]$. Observe that for the i.s.c. composite independent AVBSC $W_\mathrm{CI,id-c}$ and $\eps\in(0,\nicefrac{1}{2})$ it holds that
	\begin{align}
	\max_{g(p)\leq\Gamma}\Lambda_0(p,1-p)=\nicefrac{1}{2}>\nicefrac{1}{2}-\eps=\Lambda,
	\end{align}
	since the minimization over $U\in\mathcal{U}$ is carried out over the set of symmetrizing strategies which for the i.s.c. composite independent AVBSC is the one element set $\mathcal{U}=\{\nicefrac{1}{2}\}$. Thus, by part 2) of Theorem~3 in \cite{2627}, we get $C_\mathrm{d}(\mathfrak{W},\Lambda,\Gamma)>0$.
\end{proof}
	\begin{proof}[Proof of Theorem~\ref{thm:super_activ_sc_pw_orth}]
	We know from the proof of Theorem~3 in \cite{7447794} that positivity of the function $F(\mathcal{W})$ from Definition~\ref{def:function_F} implies that $\mathcal{W}'$ is non-symmetrizable. The existence of the effect of super-activation of the capacity of the i.s.c. composite orthogonal AVBSC is then a direct consequence of Theorem~\ref{thm:sim_symm_BSC} in combination with the observation that by allowing different inputs for the respective diversity branches (orthogonal AVBSCs), the maximization in \eqref{eqn:definition-of-F} is carried out over a larger input-alphabet set. The preceding considerations hold for both, the deterministic as well as the CR-assisted capacity. Considering all $K$ channels separately, there is no difference between the composite independent and the composite orthogonal AVBSC, that is, every AVBSC obeying the conditions stated in the theorem is symmetrizable since $BSC(\nicefrac{1}{2})\in\conv(\mathcal W_i)$ for all $i\in[K]$. This results in zero deterministic as well as CR-assisted capacity. In contrast, when turning over to the joint use of the $K$ orthogonal channels, Theorem~\ref{thm:sim_symm_BSC} ensure non-symmetrizability and thus positivity for both capacities.
	\end{proof}
\end{subsection}
\end{section}
\begin{section}{Discussion and Conclusion}\label{sec:discussion_and_conclusion}
In this contribution, we investigated the impact of receive diversity on reliable communication. Our model advances on previous studies by taking into account the potential impact of a jammer. We proved that receive diversity has the potential to dramatically increase both the CR-assisted and the non-assisted capacity. Most importantly, it turned out that there exist many channels which may independently be symmetrizable, but become non-symmetrizable when operated in SIMO mode when the jammer cannot adapt to the channels individually. Such communication systems can especially be found in V2X communication. Precisely speaking, we showed that for composite independent AVCs under identical state constraints, the deployment of three uncorrelated receive antennas or the communication over three orthogonal frequencies already avoids symmetrizability and thus zero capacity, if all the channels under consideration are independent or orthogonal AVBSCs. We explicitly highlighted that the deterministic capacities of composite independent and orthogonal AVBSCs under an i.s.c. jammer are continuous and show the phenomenon of super-activation if an additional power constraint is imposed on the jammer's and the transmitter's strategy or if channel states of little practical relevance are excluded. For the second restriction, we proved that also the CR-assisted capacity shows super-activation. The appearance of super-activation in the considered setting is the strongest form of violation of the additivity statement for orthogonal AVCs, since for the general unconstrained composite orthogonal binary symmetric AVC super-activation does not occur, but super-additivity does. The capacity of unrestricted orthogonal AVCs is upper-bounded by the sum of the CR-assisted capacities. Thus, the CR-assisted capacity can be seen as the benchmark quantity for characterizing the performance of communication over an AVC. In this contribution, we showed that by using diversity not only stabilizing the communication over AVCs, but even lifting the CR-assisted capacity away from zero is possible. This recognition is of particular importance for control channels in modern communication systems for which high reliability has to be ensured. This is because failures in the control channel directly lead to failures in the whole communication system provoking system breakdown.\\
Summarizing, we presented the concept of receive diversity as an enabler for super-activation and, thus, reliable communication over a channel with arbitrarily varying interference. The results, presented in this contribution, can be transferred to arbitrary scenarios in which multiple receivers are cooperating in AVCs with an i.s.c. jammer. In order to benefit from the introduced concept, one major prerequisite is that cooperation is done lossless and the channels of the transmitter to the cooperating receivers are independent. The latter assumption is achieved by spatial separation of the cooperating parties or the usage of multiple, orthogonal frequencies. An important question for further analysis is how this result can be adapted to more general AVCs of arbitrary alphabet size. However, this seems to be a rather complicated problem since it is extremely hard to analyze \eqref{eqn:matrix_eq_2} when the alphabets are non-binary. In addition, it is of great interest to find a concrete implementation of a cooperation protocol. The concept of Willems conferencing used in \cite{6384744}, in the context of AVMACs with conferencing encoders, may be a promising approach. 
\end{section}
\begin{acknowledgement}
	C.A. thanks P. Fertl and A. Posselt from the BMW Group for supporting his studies. Funding is acknowledged from DFG via grant BO 1734/20-1, BMBF via grants 01BQ1050 and 16KIS0118 (H.B.), DFG via grant NO 1129/1-1, BMWi and ESF via grant 03EFHSN102, the ERC Advanced Grant IRQUAT, the Spanish MINECO Project No. FIS2013-40627-P and the Generalitat de Catalunya CIRIT Project No. 2014 SGR 966 (J.N.).
\end{acknowledgement}
\pagebreak

\end{document}